\definecolor{codegreen}{rgb}{0,0.6,0}
\definecolor{codegray}{rgb}{0.5,0.5,0.5}
\definecolor{codepurple}{rgb}{0.58,0,0.82}
\definecolor{backcolour}{rgb}{0.95,0.95,0.92}
\newcommand\rurl[1]{%
	\href{http://#1}{\nolinkurl{#1}}%
}
\lstdefinestyle{mystyle}{
	backgroundcolor=\color{backcolour},   
	commentstyle=\color{codegreen},
	keywordstyle=\color{magenta},
	numberstyle=\tiny\color{codegray},
	stringstyle=\color{codepurple},
	basicstyle=\ttfamily\footnotesize,
	breakatwhitespace=false,         
	breaklines=true,                 
	captionpos=b,                    
	keepspaces=true,                 
	numbersep=5pt,                  
	showspaces=false,                
	showstringspaces=false,
	showtabs=false,                  
	tabsize=4
}
\newcolumntype{x}[1]{>{\centering\arraybackslash\hspace{0pt}}p{#1}}
\crefname{lstlisting}{listing}{listings}
\Crefname{lstlisting}{Listing}{Listings}
\newcommand{\pyffs}{{\MakeLowercase{py}\MakeUppercase{FFS}}}
\crefname{hypothesis}{Hypothesis}{Hypotheses}
\title{\pyffs: A Python Library for Fast Fourier Series Computation and Interpolation with GPU Acceleration\thanks{Published on August 18, 2022.
\funding{This work was in part funded by the Swiss National Science Foundation (SNSF) under grants 200021\textunderscore181978/1 SESAM –- Sensing and Sampling: Theory and Algorithms (first, second and fifth authors) and CRSII5 180232 FemtoLippmann -- Digital twin for multispectral imaging (second and fifth authors). For part of this work, the second, third, and fifth authors were also with the Foundations of Cognitive Solutions group in the IBM Research Laboratory of Zurich.}
}
}
\author{Eric Bezzam\thanks{Audiovisual Communications Laboratory, \'{E}cole Polytechnique F\'{e}d\'{e}rale de Lausanne, Switzerland 
  (\email{eric.bezzam@epfl.ch}, \email{sepand.kashani@epfl.ch}, \email{martin.vetterli@epfl.ch}).}
\and Sepand Kashani\footnotemark[2]
\and Paul Hurley\thanks{Western Sydney University, Australia 
  (\email{p.hurley@westernsydney.edu.au}).}
\and Martin Vetterli\footnotemark[2]
\and Matthieu Simeoni\thanks{Centre for Imaging, \'{E}cole Polytechnique F\'{e}d\'{e}rale de Lausanne, Switzerland 
	(\email{matthieu.simeoni@epfl.ch}).}
}
\newcommand{\bbE}{{\bf{E}}}
\newcommand{\bbx}{{\bf{x}}}
\newcommand{\bbX}{{\bf{X}}}
\newcommand{\bbZero}{{\bf{0}}}
\newcommand{\bC}{\mathbb{C}}
\newcommand{\bN}{\mathbb{N}}
\newcommand{\bR}{\mathbb{R}}
\newcommand{\bZ}{\mathbb{Z}}
\newcommand{\cO}{\mathcal{O}}
\newcommand{\bigBrack}[1]{\left[ #1 \right]}
\newcommand{\bigCurly}[1]{\left\{ #1 \right\}}
\newcommand{\bigParen}[1]{\left( #1 \right)}
\DeclareMathOperator{\DFT}{DFT}
\DeclareMathOperator{\iDFT}{IDFT}
\DeclareMathOperator{\CZT}{CZT}
\DeclarePairedDelimiter{\ceil}{\lceil}{\rceil}
\begin{document}

\maketitle

\begin{abstract}
Fourier transforms are an often necessary component in many computational tasks, and can be computed efficiently through the fast Fourier transform (FFT) algorithm. 
However, many applications involve an underlying continuous signal, and a more natural choice would be to work with e.g.\ the Fourier series (FS) coefficients in order to avoid the additional overhead of translating between the analog and discrete domains. Unfortunately, there exists very little literature and tools for the manipulation of FS coefficients from discrete samples.
This paper introduces a Python library called pyFFS for efficient FS coefficient computation, convolution, and interpolation. 
While the libraries SciPy and NumPy provide efficient routines for discrete Fourier transform coefficients via the FFT algorithm, pyFFS addresses the computation of FS coefficients through what we call the fast Fourier series (FFS). 
Moreover, pyFFS includes an FS interpolation method based on the chirp Z-transform that can make it more than an order of magnitude faster than the SciPy equivalent when one wishes to perform distortionless bandlimited interpolation. GPU support through CuPy is readily available, and allows for further acceleration: an order of 
magnitude faster for computing the 2-D FS coefficients of $1000\times1000$ 
samples and nearly two orders of magnitude faster for 2-D interpolation.
As an application, we discuss the use of pyFFS in Fourier optics.
pyFFS is available as an open source package at https://github.com/imagingofthings/pyFFS, with documentation at https://pyffs.readthedocs.io.
%
\end{abstract}
\begin{keywords}
	fast Fourier series, bandlimited interpolation, chirp Z-transform, numerical library, Python, GPU
\end{keywords}

\begin{AMS}
  65T40, 97N80, 97N50, 42B05
\end{AMS}

\section{Introduction}

Discretization is an inevitable part of digital signal processing. Although the universe around us moves and shakes with infinite precision, our computers can only handle so much. However, with a useful model and the appropriate analog and digital processing, we can faithfully simulate and process continuous-domain processes.
This idea is the essence of the Nyquist-Shannon sampling theorem for bandlimited signals~\cite{Shannon1949} and more generally for signals that have a finite rate of innovation~\cite{Blu2008}. Simply put: a function $f(t)$ with finite degrees-of-freedom can be completely determined by a finite number of its samples. In his paper~\cite{Shannon1949}, Shannon even admits that the theorem so often associated with him is a fact ``which is common knowledge in the communication art.'' However, without the proper formulation, common knowledge may never seep into common practice. 

The inception of the fast Fourier transform (FFT) algorithm~\cite{Cooley1965} has a  similar story.
Arguably one of the most influential algorithms in computational and natural sciences, the FFT was nearly not published, as one of the original authors, John Tukey, felt that it was a simple observation that was probably already known and unworthy of attention.\footnote{As history would later tell, Tukey was indeed right as Gauss had written a paper about an interpolation technique with essentially the same idea~\cite{Gauss1866}.} With the digital revolution that ensued afterwards, the FFT quickly became an indispensable algorithm in digital signal processing. Once again, we see that without the proper formulation and implementation, common knowledge may not establish itself in common practice.

Continuing along these lines,
we formulate old Fourier analysis tricks  with a new perspective and propose new tools to efficiently compute and interpolate Fourier series (FS) coefficients. As is well known, for discrete samples of a periodic, bandlimited signal, we can perfectly recover the FS coefficients of the underlying function when sampling is done according to the Nyquist-Shannon sampling theorem \cite{vetterli2014foundations}. Moreover, continuous-domain operations, such as convolution and interpolation, can be implemented in terms of distortionless discrete operations on the FS coefficients. 
These mathematical facts can be used in practice to design an algorithm for fast FS computation and interpolation, which we call the \emph{fast Fourier series} (FFS) algorithm. While deceptively simple, the FFS algorithm is surprisingly neither described in signal processing textbooks nor implemented in numerical computing libraries. For example, NumPy~\cite{Harris2020} and SciPy~\cite{Virtanen2020}  in the Python ecosystem focus mainly on the discrete Fourier transform (DFT) and related operations. We aim to change this by providing an efficient and easy-to-use interface to the FFS algorithm via a Python package called pyFFS. Furthermore, distortionless discrete operations with pyFFS are not limited to periodic signals. In fact, one is often interested in working with signals of compact-support, which can be cast as periodic by repeating the signal with a period larger than or equal to its support.


So why use pyFFS rather than the FFT routines from NumPy/SciPy? One reason is convenience when working with continuous-domain signals. The philosophy of pyFFS is to retain the continuous-domain perspective, often neglected when using numerical libraries such as NumPy and SciPy, which allows for much clearer code as seen in \Cref{lst:prop}. This can prevent common pitfalls due to an invalid conversion between discrete and continuous domains. Moreover, FS  coefficients are an important component of the Non Uniform Fast Fourier Transform (NUFFT), extensively used in MRI imaging and other computational imaging modalities~\cite{barnett2019parallel}. Another reason is efficiency. We benchmark pyFFS with equivalent functions in SciPy, observing scenarios in which the proposed library is more than an order of magnitude faster, e.g.\ for interpolation. Moreover, GPU support has been seamlessly incorporated for an even faster implementation. Just as the FFT implementation via NumPy and SciPy can be readily used 
for an efficient $ \mathcal{O}(N \log N) $ analysis and synthesis of discrete sequences, pyFFS offers the same ease-of-use and performance capabilities for discrete representations of continuous-domain signals.

The paper is organized as follows. In \Cref{sec:theory}, we go over the theory behind pyFFS and motivate why working with FS coefficients may be preferable. We also present theorems showing how FS coefficients of periodic, bandlimited signals can be computed and interpolated exactly and efficiently. In \Cref{sec:pyffs}, the Python user interface of pyFFS is discussed.\footnote{A more extensive documentation can be found at \url{https://pyffs.readthedocs.io}.} \Cref{sec:benchmark} presents benchmarking tests showing the efficiency of pyFFS against SciPy and the gains of GPU acceleration. 
In \Cref{sec:results}, we discuss an application of pyFFS in Fourier optics. Another application of pyFFS can be found in \cite{fageot2020tv}, where the authors make use of the FFS algorithm to efficiently compute multidimensional periodic splines. \Cref{sec:conclusion} concludes the paper.
\section{Theory}
\label{sec:theory}

In this section, we give an overview of the theory behind pyFFS.
This can be useful to those who wish to understand what is happening under the hood and sheds light as to when using FS coefficients over DFT ones may be of interest.

\subsection{Numerically compute FS coefficients}

At first glance, it seems unlikely that FS coefficients could be of practical use in computational scenarios, as their very definition  necessitates the integral of a continuous function~\cite{vetterli2014foundations}
\begin{equation}\label{eq:fs}
X_{k}^{\text{FS}} = \frac{1}{T} \int_{T_{c} - \frac{T}{2}}^{T_{c} + \frac{T}{2}} x(t) \exp\bigParen{-j \frac{2 \pi}{T} k t} dt,
\end{equation}
where $x : \bR \to \bC$ is a $T$-periodic function and $T_{c}$ is any period mid-point.

The synthesis equation, on the other hand, expresses this function in terms of a discrete, albeit infinite, set of samples, namely the FS coefficients
\begin{equation}\label{eq:ifs}
x(t) = \sum_{k \in \bZ} X_{k}^{\text{FS}} \exp\bigParen{j \frac{2 \pi}{T} k t}.
\end{equation}

For a bandlimited signal, we can write~\cref{eq:ifs} as
\begin{equation}
\label{eq:bl_ifs}
x(t) = \sum_{k=-N}^{N} X_{k}^{\text{FS}} \exp\bigParen{j \frac{2 \pi}{T} k t}, \quad N \ge 0,
\end{equation}
where $ x(t) $ is said to have a bandwidth of $N_{\text{FS}} = 2 N + 1$.\footnote{This is equivalent to a maximal frequency of $ N / T $ Hz.} Moreover, by taking uniform samples with a sampling period of $ T_s = T/N_{\text{FS}} $, we obtain
\begin{align}
\label{eq:bl_ifs_samp}
x(nT_s) = \bigBrack{\bbx}_{n} &= \sum_{k = -N}^N X_{k}^{\text{FS}} \exp\bigParen{j \frac{2 \pi}{T} k (nT_s)} \nonumber
\\ & \overset{(a)}{=} \sum_{k = -N}^N X_{k}^{\text{FS}}  \exp\bigParen{j \frac{2 \pi}{N_{\text{FS}}} kn} \nonumber
\\ & \overset{(b)}{=} \sum_{k = -N}^N X_{k}^{\text{FS}}  W_{ N_{\text{FS}} }^{-n k},
\end{align} 
where $ (a) $ uses $ (T_s / T) = 1 / N_{\text{FS}}$ and $ (b) $ uses $W_{N} = \exp\bigParen{-j \frac{2\pi}{N}}$. Uniform sampling is equivalent to convolution with a Dirac stream in the FS coefficient domain. Therefore the sequence $\bigCurly{X_{k}^{\text{FS}} \in \bC, \; k \in \bZ}$ is $ N_{\text{FS}} $-periodic, as shown in \Cref{fig:sample_fs_coeff}. Moreover, choosing $ T_s \leq T/N_{\text{FS}} $ ensures that the FS coefficients remain ``intact'', i.e.\ there is no distortion due to aliasing from the overlapping spectrum replicas.

In~\cref{eq:bl_ifs_samp} we have an expression that is equivalent to what is commonly referred to as the discrete Fourier series (DFS)~\cite{Prandoni2008}
\begin{equation}\label{eq:idfs}
\bigBrack{\bbx}_{n} = \sum_{k = 0}^{N_{\text{FS}}-1} \bigBrack{\bbX}_{k}  \exp\bigParen{j \frac{2 \pi}{N_{\text{FS}}} k n}, \quad n \in \mathbb{Z},
\end{equation}
which is similar to the inverse discrete Fourier transform (IDFT), except that the time index $ n $ spans all integers. 
As the coefficients of the sampled function are $ N_{\text{FS}} $-periodic, shifting the summation in~\cref{eq:bl_ifs_samp} to $ [0, N_{\text{FS}}-1] $, as in~\cref{eq:idfs},  is equivalent to the summation over $ [-N, N] $. 

\begin{figure}[t!]
	\centering
	\includegraphics[width=0.99\linewidth]{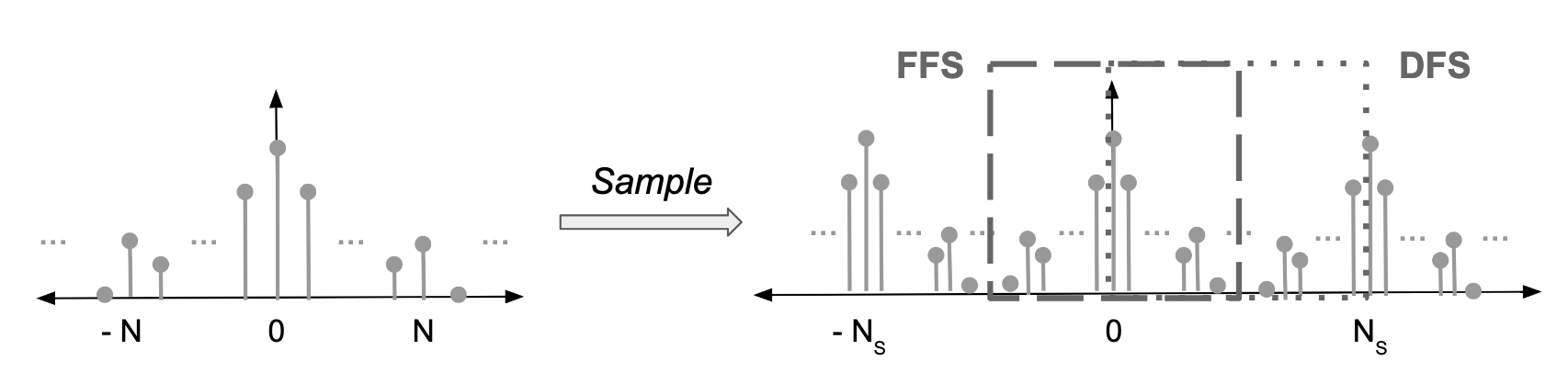} 
	\caption{Visualizing the Fourier series (FS) coefficients of a sampled, bandlimited signal: (left) original FS coefficients of a signal whose bandwidth is $ N_{\text{FS}} = 2N + 1 $; (right) sampled FS coefficients where $ N_{\text{s}} \geq N_{\text{FS}} $ such that the sampling period is $ T_{\text{s}} = T / N_{\text{s}} $. When performing synthesis, i.e.\ interpolating coefficients for time-domain values, our formulation~\cref{eq:bl_ifs_samp} sums over coefficients centered around zero (dashed box), while the discrete Fourier series~\cref{eq:idfs} sums over positive coefficients (dotted box). This is equivalent as the sampled Fourier series coefficients are a periodic sequence.}
	\label{fig:sample_fs_coeff}
\end{figure}


With~\cref {eq:bl_ifs_samp}, we can numerically compute the FS coefficients of a bandlimited, periodic signal by setting up a system of equations with at least as many samples as FS coefficients:
\[
\begin{bmatrix}
\tiny
\bigBrack{\bbx}_{-N} \\
\bigBrack{\bbx}_{-N+1}  \\
\vdots \\
\bigBrack{\bbx}_{N} 
\end{bmatrix}
= 
\scalebox{0.9}{
	$\begin{bmatrix} 
	W_{N_{FS}}^{-(-N)(-N)} & W_{N_{FS}}^{-(-N)(-N+1)}  & \cdots & W_{N_{FS}}^{-(-N)(N)} \\[5pt]
	W_{N_{FS}}^{-(-N+1)(-N)}  & W_{N_{FS}}^{-(-N+1)(-N+1)}  & \cdots & W_{N_{FS}}^{-(-N+1)(N)} \\
	\vdots & \vdots & \ddots & \vdots  \\[5pt]
	W_{N_{FS}}^{-(-N)(N)} & W_{N_{FS}}^{-(-N+1)(N)}  & \cdots & W_{N_{FS}}^{-(N)(N)} \\
	\end{bmatrix}$}
\begin{bmatrix}
X_{-N}^{\text{FS}}\\
X^{\text{FS}}_{-N+1}  \\
\vdots \\
X^{\text{FS}}_{N}
\end{bmatrix}.
\]

A naive approach could solve this system of equations for the FS coefficients by multiplying the left-hand side with the inverse of the matrix, requiring $ \mathcal{O}(N_{FS}^3) $ operations. However, the above matrix is unitary, allowing us to multiply both sides by the complex conjugate of the matrix rather than having to invert it. This gives a complexity of $ \mathcal{O}(N_{FS}^2) $. With some manipulation, the matrix-vector relationship can be turned into one where the matrix is the DFT matrix. We can then exploit the FFT algorithm to further reduce the complexity to $ \mathcal{O}(N_{FS} \log N_{FS}) $~\cite{Cooley1965}. This result is given in the following theorem, with the proof provided in \Cref{sec:proofs}.

\begin{theorem}[Fast Fourier series]\label{thm:ffs_odd}
	Let $x : \bR \to \bC$ be a $T$-periodic function of bandwidth $N_{\text{FS}} = 2 N +
	1$, with $T_{c} \in \bR$ the mid-point of any period.  Let the zero-padding amount $Q \in 2 \bN$ be an arbitrary
	even integer such that the number of samples $N_{\text{s}} = N_{\text{FS}} + Q $. Then
	\begin{gather}
	\bbx = N_{\text{s}} \iDFT_{N_s}\bigParen{\bbX^{\text{FS}} \odot B_{1}^{\bbE_{1}}} \odot B_{2}^{N \bbE_{2}}, \label{eq:ffs_odd_phi} \\
	\bbX^{\text{FS}} = \frac{1}{N_{\text{s}}} \DFT_{N_s}\bigParen{\bbx \odot B_{2}^{-N \bbE_{2}} } \odot B_{1}^{-\bbE_{1}}, 
	\label{eq:ffs_odd_phiFS}
	\end{gather}
	where
	\begin{gather}
	\bbx = \bigBrack{\bbx(t_{0}), \ldots, \bbx(t_{M}), \bbx(t_{-M}), \ldots, \bbx(t_{-1})} \in \bC^{N_{\text{s}}}, \label{eq:ffs_input} \\
	t_{n} = T_{c} + \frac{T}{N_{\text{s}}} n, \quad n \in \bZ, \label{eq:ffs_odd_tn}\\
	M = (N_{\text{s}} - 1) / 2, \nonumber\\
	\bbX^{\text{FS}} = \bigBrack{X_{-N}^{\text{FS}}, \ldots, X_{N}^{\text{FS}}, \bbZero_{Q}} \in \bC^{N_{\text{s}}}, \nonumber
	\end{gather}
	and
	\begin{minipage}{0.47\textwidth}
		\begin{center}
			\begin{gather*}
			B_{1} = \exp\bigParen{j \frac{2 \pi}{T} T_{c}} \in \bC, \\
			\bbE_{1} = \bigBrack{-N, \ldots, N, \bbZero_{Q}} \in \bZ^{N_{\text{s}}},
			\end{gather*}
		\end{center}
	\end{minipage}
	\begin{minipage}{0.47\textwidth}
		\begin{center}
			\begin{gather*}
			B_{2} = \exp\bigParen{-j \frac{2 \pi}{N_{\text{s}}}} \in \bC, \\
			\bbE_{2} = \bigBrack{0, \ldots, M, -M, \ldots, -1} \in \bZ^{N_{\text{s}}}.
			\end{gather*}
		\end{center}
	\end{minipage}
\end{theorem}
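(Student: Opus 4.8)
The plan is to prove the synthesis identity~\cref{eq:ffs_odd_phi} directly from the bandlimited expansion~\cref{eq:bl_ifs}, and then to deduce the analysis identity~\cref{eq:ffs_odd_phiFS} from it by a purely algebraic inversion. There is essentially no analytic content: bandlimitedness already collapses the infinite series~\cref{eq:ifs} into the finite sum~\cref{eq:bl_ifs}, so sampling is exact, and everything reduces to recognizing a DFT/IDFT pair and reconciling index orderings.

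First I would evaluate~\cref{eq:bl_ifs} at the sample nodes $t_{n}=T_{c}+\tfrac{T}{N_{\text{s}}}n$ from~\cref{eq:ffs_odd_tn}. Factoring $\exp\bigParen{j\tfrac{2\pi}{T}k\,t_{n}} = \exp\bigParen{j\tfrac{2\pi}{T}kT_{c}}\exp\bigParen{j\tfrac{2\pi}{N_{\text{s}}}kn}=B_{1}^{k}B_{2}^{-kn}$ gives $x(t_{n})=\sum_{k=-N}^{N}X_{k}^{\text{FS}}B_{1}^{k}B_{2}^{-kn}$ for all $n\in\bZ$. Since $N_{\text{s}}=N_{\text{FS}}+Q=2M+1$, the index set $\{0,\dots,M\}\cup\{-M,\dots,-1\}$ — which is exactly the order in which the entries of $\bbx$ are listed in~\cref{eq:ffs_input} — is a complete residue system modulo $N_{\text{s}}$, so $N_{\text{s}}$-periodicity of $k\mapsto B_{2}^{-kn}$ lets me use these representatives freely. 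Substituting $m=k+N$ rewrites the sum as $B_{2}^{Nn}\sum_{m=0}^{2N}X_{m-N}^{\text{FS}}B_{1}^{m-N}\exp\bigParen{j\tfrac{2\pi}{N_{\text{s}}}mn}$; because $N_{\text{s}}>2N$, the coefficient $X_{m-N}^{\text{FS}}B_{1}^{m-N}$ is precisely $\bigBrack{\bbX^{\text{FS}}\odot B_{1}^{\bbE_{1}}}_{m}$ for $m\in\{0,\dots,2N\}$ and vanishes on the remaining $Q$ padding slots, so the summation extends to $m\in\{0,\dots,N_{\text{s}}-1\}$ and matches the definition of the $N_{\text{s}}$-point inverse DFT. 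This yields $x(t_{n})=N_{\text{s}}B_{2}^{Nn}\bigBrack{\iDFT_{N_s}\bigParen{\bbX^{\text{FS}}\odot B_{1}^{\bbE_{1}}}}_{n}$, and collecting the exponents $Nn$ over $n\in\{0,\dots,M,-M,\dots,-1\}$ into the vector $B_{2}^{N\bbE_{2}}$ is exactly~\cref{eq:ffs_odd_phi}.

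For~\cref{eq:ffs_odd_phiFS} I would simply invert. Every entry of $B_{2}^{N\bbE_{2}}$ has modulus one, so Hadamard-multiplying~\cref{eq:ffs_odd_phi} by $B_{2}^{-N\bbE_{2}}$ isolates $N_{\text{s}}\iDFT_{N_s}(\bbX^{\text{FS}}\odot B_{1}^{\bbE_{1}})$; applying $\DFT_{N_s}$, which under the stated normalization is the exact inverse of $\iDFT_{N_s}$, then dividing by $N_{\text{s}}$ and Hadamard-multiplying by $B_{1}^{-\bbE_{1}}$ gives the claim. On the $Q$ padding coordinates both $\bbE_{1}$ and $\bbX^{\text{FS}}$ vanish, so $B_{1}^{-\bbE_{1}}$ equals $1$ there and nothing is lost in the last step.

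The hard part is not any single computation but the bookkeeping in the synthesis step, where three index conventions must be lined up at once — the natural Fourier ordering $k\in\{-N,\dots,N\}$ of $\bbX^{\text{FS}}$, the standard DFT ordering $\{0,\dots,N_{\text{s}}-1\}$, and the ``\texttt{fftshift}-free'' sample ordering $\{0,\dots,M,-M,\dots,-1\}$ of $\bbx$ — while checking that the $Q$ zeros of $\bbX^{\text{FS}}$ land between the positive and negative Fourier indices and that the diagonal phase vectors $B_{1}^{\bbE_{1}}$ and $B_{2}^{N\bbE_{2}}$ are precisely the pre- and post-modulations needed to turn the sampled expansion into a genuine inverse DFT (in particular, the hypothesis $Q\in2\bN$ is used only to make $N_{\text{s}}$ odd so that $M=(N_{\text{s}}-1)/2$ is an integer and the residue system above is symmetric).
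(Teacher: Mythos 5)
Your proposal is correct and follows essentially the same route as the paper's proof: evaluate the bandlimited synthesis sum at the nodes $t_n$, factor the exponential into $B_1^k B_2^{-kn}$, shift the index to recognize a zero-padded $N_{\text{s}}$-point IDFT with pre-modulation by $B_1^{\bbE_1}$ and post-modulation by $B_2^{N\bbE_2}$, handle the sample ordering via $N_{\text{s}}$-periodicity in $n$, and obtain the analysis formula by inverting each (unitary or invertible) factor. If anything, your explicit remark that the zero padding is what extends the sum from $2N$ to $N_{\text{s}}-1$ is slightly more careful than the paper's own write-up at that step.
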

$ \DFT_N $ and $ \iDFT_N $ denote the length-$ N $ DFT and IDFT respectively. The operation $ \odot $ is an element-wise multiplication, with which in~\cref{eq:ffs_odd_phi,eq:ffs_odd_phiFS} we modulate the input and the output of the standard DFT and IDFT operations. This modulation shifts the summation bounds in~\cref{eq:bl_ifs_samp} to those of the DFT, allowing direct use of the FFT for an efficient computation. The zero-padding $ Q $ can be used to set the FFT length to a highly composite value for faster computation. 

The above theorem assumes an odd-length $ N_{\text{s}} $. For an even-length sequence, a slight modification of the sample locations $ t_n $ and modulation terms $ B_1, B_2 $ is necessary. As the general idea is the same, we refer the reader to \Cref{sec:proofs} for the presentation and proof for the odd- and the even-length cases.

\subsection{Efficient interpolation of FS coefficients}

There is nothing particularly unique about the FS coefficients given by \Cref{thm:ffs_odd} in~\cref {eq:ffs_odd_phiFS}. They are essentially DFT coefficients modulated, reordered, and scaled so that the resulting sequence is directly the FS coefficients within $ [-N, N] $. However, their intrinsic link to a continuous signal makes them all the more convenient when we consider interpolation at arbitrary sample locations, rather than griding or sub-griding at the discrete level. From a practitioner's perspective, it may also be more intuitive to think and work directly with the continuous-domain sample locations rather than the corresponding sample (or subsample) values. 

\Cref{eq:bl_ifs} could be used to compute the value of the underlying bandlimited function at arbitrary sample locations, requiring $\cO(MN_{\text{FS}})$ operations for $ M $ sample points. Below we show one way these FS coefficients can be interpolated at $M$ regularly-spaced samples in a more efficient manner. This is done by making use of the chirp Z-transform (CZT).

\begin{definition}[Chirp Z-transform]\label{def:czt}
	Let $\bbx \in \bC^{N}$. The length-$M$ \emph{chirp Z-transform} $\CZT_{N}^{M}(\bbx) \in
	\bC^{M}$ of parameters $A, W \in \bC^{*}$ is defined as
	\cite{rabiner1969chirp}
	\begin{equation}\label{eq:czt}
	\bigBrack{\CZT_{N}^{M}(\bbx)}_{k} = \sum_{n = 0}^{N - 1} \bigBrack{\bbx}_{n} A^{-n} W^{n k}, \quad k \in \bigCurly{0, \ldots, M - 1},
	\end{equation}
where $ A$ is the complex starting point and $W$ is the complex ratio between points along a logarithmic spiral contour. The CZT is a generalization of the DFT which samples the $Z$ plane at uniformly-spaced points along the unit circle.
\end{definition}
$\CZT_{N}^{M}$ can be efficiently computed using the $\DFT$ and $\iDFT$ in $\cO(L \log L)$
operations, where $L \ge N + M - 1$. This is done via Bluestein's algorithm~\cite{Bluestein1970}. A similar formulation of the CZT and its efficient computation is known as the fast fractional FT algorithm~\cite{Bailey1991}.

The theorem below makes use of the CZT and Bluestein's algorithm to perform efficient interpolation at regularly-spaced values. 

\begin{theorem}[FS interpolation, $x : \bR \to \bC$]\label{thm:fast_interp_complex}
	Let $x: \bR \to \bC$ be a $T$-periodic function of bandwidth $N_{\text{FS}}= 2 N +
	1$.  Let $a < b \in \bR$ be the end-points of an interval on which we want to evaluate
	$M$ equi-spaced samples of $x$. Then
	\begin{equation}\label{eq:fast_interp_complex}
	\bbx = A^{N} \CZT_{N_{\text{FS}}}^{M}(\bbX^{\text{FS}}) \odot W^{-N \bbE},
	\end{equation}
	where
	
	\begin{minipage}{0.45\textwidth}
		\begin{center}
			\begin{gather*}
			\bbx = \bigBrack{x(t_{0}), \ldots, x(t_{M-1})} \in \bC^{M}, \\
			\bbX^{\text{FS}} = \bigBrack{X_{-N}^{\text{FS}}, \ldots, X_{N}^{\text{FS}}} \in \bC^{N_{\text{FS}}}, \\
			t_{n} = a + \frac{b - a}{M - 1} n, \quad n \in \bZ,
			\end{gather*}
		\end{center}
	\end{minipage}
	\begin{minipage}{0.45\textwidth}
		\begin{center}
			\begin{gather*}
			A = \exp\bigParen{-j \frac{2\pi}{T} a}, \\
			W = \exp\bigParen{j \frac{2 \pi}{T} \frac{b - a}{M - 1}}, \\
			\bbE = \bigBrack{0, \ldots, M -1} \in \bN^{M},
			\end{gather*}
		\end{center}
	\end{minipage}

and $\CZT$ (of parameters $A, W$) is as defined in \cref{def:czt}.
\end{theorem}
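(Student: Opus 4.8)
The plan is to start from the bandlimited synthesis formula \cref{eq:bl_ifs}, evaluate it at the prescribed grid $t_{n} = a + \frac{b-a}{M-1} n$, $n \in \bigCurly{0, \ldots, M-1}$, and massage the resulting sum into the shape of a chirp Z-transform. Substituting $t_n$ into \cref{eq:bl_ifs} and splitting the exponential,
\begin{equation*}
x(t_n) = \sum_{k=-N}^{N} X_k^{\text{FS}} \exp\bigParen{j\tfrac{2\pi}{T} k a}\, \exp\bigParen{j\tfrac{2\pi}{T}\tfrac{b-a}{M-1} k n},
\end{equation*}
and recognizing the first factor as $A^{-k}$ and the second as $W^{kn}$ for the $A, W$ given in the statement, yields $x(t_n) = \sum_{k=-N}^{N} X_k^{\text{FS}} A^{-k} W^{kn}$.

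Next I would reconcile the symmetric index range $[-N, N]$ with the zero-based convention of \cref{def:czt}. Re-indexing by $m = k + N \in \bigCurly{0, \ldots, N_{\text{FS}}-1}$ so that $\bigBrack{\bbX^{\text{FS}}}_m = X_{m-N}^{\text{FS}}$, and factoring the $m$-independent term $A^{N} W^{-Nn}$ out of the sum, gives
\begin{equation*}
x(t_n) = A^{N} W^{-N n} \sum_{m=0}^{N_{\text{FS}}-1} \bigBrack{\bbX^{\text{FS}}}_m A^{-m} W^{mn} = A^{N} W^{-N n}\, \bigBrack{\CZT_{N_{\text{FS}}}^{M}(\bbX^{\text{FS}})}_n,
\end{equation*}
the last step being exactly \cref{eq:czt}. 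Since $\bigBrack{\bbE}_n = n$, stacking these $M$ scalar identities into a vector is precisely \cref{eq:fast_interp_complex}. The efficiency claim then follows because $\CZT_{N_{\text{FS}}}^{M}$ is evaluated via Bluestein's algorithm in $\cO(L\log L)$ operations with $L \ge N_{\text{FS}} + M - 1$, while the pre-modulation by $A^{N}$ and post-modulation by $W^{-N\bbE}$ cost only $\cO(M)$, improving on the $\cO(M N_{\text{FS}})$ of directly evaluating \cref{eq:bl_ifs} at each point.

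I do not expect a genuine obstacle here: the argument is essentially index bookkeeping, and the only points requiring care are the shift $k \mapsto k + N$ together with the chirp factor $A^{N} W^{-Nn}$ it produces, and keeping straight that the CZT here acts on the length-$N_{\text{FS}}$ vector of FS coefficients while returning the length-$M$ vector of interpolated samples (so the roles of ``$N$'' and ``$M$'' in \cref{def:czt} are played by $N_{\text{FS}}$ and $M$ respectively). For the real-valued companion result one would additionally invoke the Hermitian symmetry $X_{-k}^{\text{FS}} = \overline{X_{k}^{\text{FS}}}$ to work with a half-length transform, but for the complex case stated here no such refinement is needed.
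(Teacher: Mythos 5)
Your proposal is correct and follows essentially the same route as the paper's own proof in Appendix B: substitute $t_n$ into the bandlimited synthesis formula, identify $A^{-k}$ and $W^{kn}$, shift the index range from $[-N,N]$ to $[0,N_{\text{FS}}-1]$ to extract the chirp factor $A^{N}W^{-Nn}$, and recognize the remaining sum as the CZT. The sign conventions and the factored modulation term all check out, so there is nothing to add.
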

The proof can be found in \Cref{sec:fs_interp_proof}. A similar interpolation technique with the fractional FT is presented in~\cite{Bailey1991}.

With \Cref{thm:fast_interp_complex}, one can interpolate sub-sections of a period efficiently. Moreover, it is possible to perform DFTs of a smaller length than what would be normally required with a more standard IDFT interpolation approach, namely zero-padding the DFT coefficients and taking a longer IDFT for an increase in temporal resolution across the \emph{entire} period. 

\subsubsection*{Complexity comparison}

To get an idea of when it is preferable to use the proposed technique over zero-padding DFT coefficients, we present a rough complexity analysis. For a $ T $-periodic function that we would like to evaluate at steps of $ \Delta t $, we would need $ N_{\text{target}} =  \ceil{T/\Delta t} $ samples within a single period. If we had $ N < N_{\text{target}} $ samples,\footnote{With $ N > N_{\text{FS}} $ so that we can have ideal reconstruction according to the Nyquist-Shannon sampling theorem.} we would need to pad the DFT coefficients with $ (N_\text{target} - N) $ zeros in order to get this temporal resolution, resulting in an IDFT with complexity $ \cO ( N_\text{target} \log N_\text{target} ) $. With \Cref{thm:fast_interp_complex}, interpolating at steps of $ \Delta t $ over the entire period $ T $  would lend to a computational complexity of $ \cO \big( (N_\text{target} + N_{\text{FS}}) \log (N_\text{target}+N_{\text{FS}}) \big) $, which is certainly not advantageous. The benefits arise when we wish to interpolate over a smaller region within the period, i.e.\ when zooming in on a section of length $ mT $ with $ m \in (0, 1) $. In such a scenario, the number of interpolation points is $ M = \ceil{mT/\Delta t}  $. As the complexity of \Cref{thm:fast_interp_complex} is $ \cO \big( (M + N_{\text{FS}}) \log (M+N_{\text{FS}}) \big) $, it can be more efficient than zero-padding the DFT coefficients when $ (M + N_{\text{FS}} )$ is less than the padded IDFT length $N_{\text{target}} = \ceil{T/\Delta t}  $.

Both the proposed approach in \Cref{thm:fast_interp_complex} and interpolation by zero-padding the DFT coefficients are bandlimited interpolation techniques. Given all the FS or DFT coefficients of a bandlimited signal, both yield a distortionless interpolation. A comparison between these two approaches is done in \Cref{sec:bench_interp}.

Another approach for bandlimited interpolation is sinc interpolation with Dirichlet apodization~\cite{Thevenaz2009}, and it allows one to focus on a specific region much like  \Cref{thm:fast_interp_complex}. However, its complexity is given by $ \cO ( M  N_{\text{FS}}) $, which quickly becomes prohibitive as the number of interpolation points or the bandwidth increases. So we do not consider this approach in our comparison.

\subsection{When to use FS coefficients}

In our discussion above, we already came across two requirements on the input in order to numerically compute its FS coefficients from a discrete set of samples without distortion: 
\begin{itemize}
	\item \emph{Periodic} as the FS coefficients are defined for such signals.
	\item \emph{Bandlimited} so that we can perfectly recover the FS coefficients from a finite set of discrete samples. 
\end{itemize}
In fact, the DFT makes similar assumptions on periodicity. As we saw with the DFS, when we consider indices outside of the sequence's finite support, we observe a periodic structure. Consequently, when one wishes to interpolate outside the support of a finite sequence, it is common to apply a tapering window in order to avoid sharp discontinuities at the boundaries. Similarly, in order to apply \Cref{thm:ffs_odd}, we can consider the samples of an arbitrary continuous function as a truncation that we periodize and optionally taper to remove any discontinuities at the boundaries.
For example, in \Cref{fig:discontinuous_sine} we compute the FS coefficients of a truncated sinusoid with a large discontinuity at the border. Applying a tapering window attenuates the boundaries to zero, thereby removing the discontinuity as seen in the periodization (bottom of \Cref{fig:discontinuous_sine_time}) and suppressing higher frequency coefficients that arise due to this discontinuity (\Cref{fig:discontinuous_sine_fs}).

\begin{figure}[t!]
	\centering
	\begin{subfigure}{.49\textwidth}
		\centering
		\includegraphics[width=0.99\linewidth]{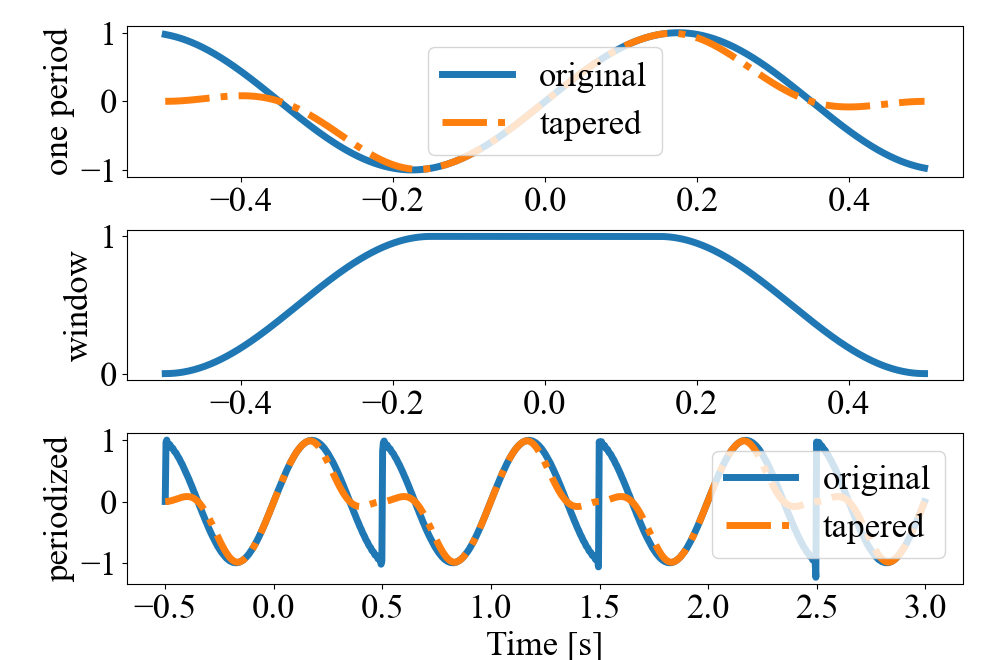} 
		\caption{}
		\label{fig:discontinuous_sine_time}
	\end{subfigure}
	\hfill
	\begin{subfigure}{.49\textwidth}
		\centering
		\includegraphics[width=0.99\linewidth]{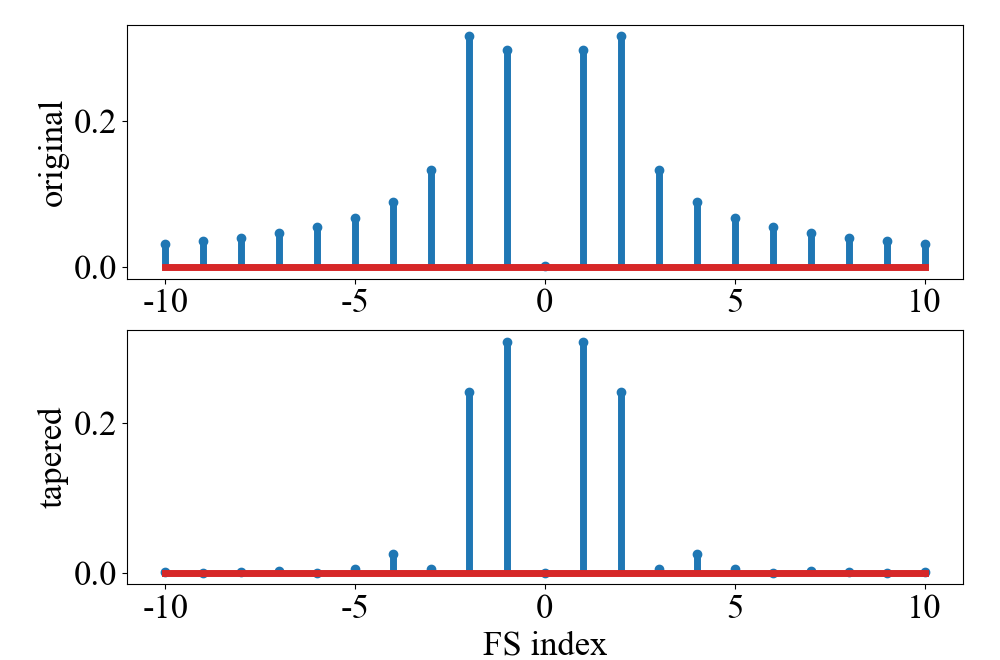}
		\caption{}
		\label{fig:discontinuous_sine_fs}
	\end{subfigure}
	\caption{Example to motivate tapering window before Fourier analysis. Notice in~\cref{fig:discontinuous_sine_fs}, the higher frequency coefficients that appear in the Fourier analysis due to the discontinuity (top), and how they are suppressed when applying a tapering window (below).}
	\label{fig:discontinuous_sine}
\end{figure}

Although not necessarily required for the DFT, bandlimitness of the input sequence is desired prior to sampling so that distortion due to aliasing is minimized. However, after sampling there is not much that can be done to remove such aliasing. The same is true for computing the FS coefficients. However, if one has control on the acquisition process, the sampling rate can be set to avoid spectral overlap or the signal can be bandlimited appropriately. As we saw in \cref{eq:bl_ifs}, bandlimiting corresponds to truncating the FS summation.

%
%


FS coefficients coupled with knowledge about the underlying continuous signal therefore allows us to better understand, and perhaps control, where errors could arise during our analysis and processing:
\begin{itemize}
	\item From aliasing if the underlying function could not be bandlimited or sampled appropriately.
	\item From truncation of the FS coefficients to ensure bandlimitedness. The mean-squared error of this truncation can be bounded~\cite{Giardina1972}.
	\item From a tapering window to remove border discontinuities if a function was simply periodized by truncation and repetition.
\end{itemize}
Applying the DFT has similar constraints and consequences that tend to get ignored in practice, as they may not be critical for the application or cannot be corrected with the discrete samples alone.

In practice, discrete samples are bandlimited due to the sampling operation, although there might be aliasing if the proper filter could not be used beforehand, e.g.\ in medical imaging~\cite{Thevenaz2009}. In such scenarios, we can only expect to obtain the FS coefficients of the underlying aliased signal.


\subsection{Multidimensional}

The above methods are not limited to a single dimension. As the DFT plays a central role, the FS coefficient and interpolation tools can be extended to multidimensional in a similar fashion as the DFT.
By leveraging the property that the multidimensional DFT can be computed as a composition of one-dimensional DFTs along each dimension, the FS computation and interpolation methods above, namely \Cref{thm:ffs_odd,thm:fast_interp_complex}, can be similarly computed along each dimension of the samples and the FS coefficients respectively.

This application of the 1-D FFT along each dimension results in a complexity of $\cO(N \log N)$, where $N = N_1 \cdot N_2 \cdot \cdots N_D$ for a general N-D signal. More efficient algorithms for performing the multidimensional DFT do exist~\cite{Duhamel1990}, which we make use of in our implementations of \Cref{thm:ffs_odd,thm:fast_interp_complex} in multiple dimensions.

\section{pyFFS overview and usage}
\label{sec:pyffs}

pyFFS is a Python library for performing efficient and distortionless FS coefficient computation, convolution, and interpolation for periodic, bandlimited signals. The goal is to provide an intuitive tool to numerically work with such signals of any dimension $ D $. Just as the FFT functions from NumPy~\cite{Harris2020} and SciPy~\cite{Virtanen2020} can be used without too much thought about the internal details, we have created a user interface for FS computations that can also be used out-of-the-box for the appropriate scenario.

\subsection{Fourier series analysis and synthesis}

The user interface for 1-D functions is shown below. Note that the samples provided to \texttt{ffs} must be in the same order as specified in \cref{eq:ffs_input}, which is not in chronological order. The method \texttt{ffs\textunderscore sample} returns the timestamps and indices necessary for ensuring the samples provided to \texttt{ffs} are in the expected order.

\begin{lstlisting}[language=Python, caption=1-D fast Fourier series analysis and synthesis.]
# determine appropriate timestamps and indices for rearranging input
sample_points, idx = pyffs.ffs_sample(
  T,      # function period
  N_FS,   # function bandwidth, i.e. number of FS coefficients (odd)
  T_c,    # function center
  N_s     # number of samples
)

# sample a known function at the correctly ordered timestamps
x = pyffs.func.dirichlet(sample_points, T, T_c, N_FS)
# OR rearrange ordered samples using `idx`
# x = x[idx]

# compute FS coefficients 
x_FS = pyffs.ffs(x, T, T_c, N_FS)

# back to samples with inverse transform
x_r = pyffs.iffs(x_FS, T, T_c, N_FS)    # equivalent to x
\end{lstlisting}
The user interface for the general N-D case is shown below, with the specific example of 2-D. As in the 1-D case, samples provided to \texttt{ffsn} are not in increasing order of the input variables. The method \texttt{ffsn\textunderscore sample} returns the locations and indices necessary for making sure the samples provided to \texttt{ffsn} are in the expected order. Alternatively, the method \texttt{ffs\textunderscore shift} can be used to reorder the samples.

\begin{lstlisting}[language=Python, caption=2-D fast Fourier series analysis and synthesis.]
T = [T_x, T_y]         # list of periods for each dimension
T_c = [T_cx, T_cy]     # list of function centers for each dimension
N_FS = [N_FSx, N_FSy]  # list of function bandwidths for each dimension
N_s = [N_sx, N_sy]     # number of samples per dimension

# determine appropriate timestamps and indices for rearranging input
sample_points, idx = pyffs.ffsn_sample(T=T, N_FS=N_FS, T_c=T_c, N_s=N_s)

# sample a known function at the correctly ordered timestamps
x = pyffs.func.dirichlet_2D(sample_points, T, T_c, N_FS)
# OR rearrange ordered samples
# x = pyffs.ffs_shift(x)

# compute FS coefficients 
x_FS = pyffs.ffsn(x, T=T, T_c=T_c, N_FS=N_FS)

# go back to samples
x_r = pyffs.iffsn(x_FS, T=T, T_c=T_c, N_FS=N_FS)    # equivalent to x
\end{lstlisting}

\subsection{Circular convolution}

The user interface for N-D functions is shown below. \texttt{f} and \texttt{h} are function values at the sampling points specified by \texttt{ffsn\textunderscore sample}, namely they must be functions of the same period \texttt{T}, with the same period center \texttt{T\textunderscore c}, and of the same bandwidth \texttt{N\textunderscore FS}.

Samples can be provided in their natural order or in the order expected by \texttt{ffsn}. By default, the argument \texttt{reorder} is set to \texttt{True}, such that samples are expected in their natural order and are reordered internally.
The output samples are returned in the same order as the inputs.

\begin{lstlisting}[language=Python, caption=Circular convolution of two N-D functions through Fourier series coefficients.]
out = pyffs.convolve(
  f=f,  # samples of one function in the convolution
  h=h,  # samples of the other function in the convolution
  T=T,  # period(s) of both functions along all dimensions
  T_c=T_c,  # period center(s) of both functions along all dimensions
  N_FS=N_FS,   # number of FS coefficients for both functions along all dimensions
  reorder=True  # whether input samples should be reordered into expected order for ffsn
)
\end{lstlisting}


\subsection{Interpolation}

The user interface for 1-D functions is shown below.

\begin{lstlisting}[language=Python, caption=1-D fast Fourier series interpolation.]
x_interp = pyffs.fs_interp(
  x_FS,   # FS coefficients in increasing order of index
  T,      # period
  a,      # start time
  b,      # stop stop
  M       # number of points
)

\end{lstlisting}
The user interface for the general N-D case is shown below, with the specific example of 2-D.

\begin{lstlisting}[language=Python, caption=2-D fast Fourier series interpolation.]
x_interp = pyffs.fs_interpn(
  x_FS,            # multidimensional FS coefficients
  T=[T_x, T_y],    # list of periods for each dimension
  a=[a_x, a_y],    # list of start points for each dimension
  b=[b_x, b_y],    # list of stop points for each dimension
  M=[M_x, M_y]     # number of samples per dimension
)
\end{lstlisting}
In both cases, the provided FS coefficients must be ordered such that the indices are in increasing order, as returned by \texttt{ffs} and \texttt{ffsn}.

\subsection{GPU support}

GPU usage can lead to a significant reduction in computation time if a task consists of many operations that can be done in parallel. The FFT algorithm can be parallelized and could therefore benefit from such a reduction in computation time. As the fast FS algorithm presented in \cref{thm:ffs_odd} makes use of the DFT and IDFT, it can directly benefit from this speed-up, and so can the interpolation of FS coefficients described in \cref{thm:fast_interp_complex}, as Bluestein's algorithm for the CZT employs the DFT and IDFT.

GPU support is available through the CuPy library~\cite{Okuta2017}. If the appropriate version of CuPy is installed,\footnote{See installation guide: \rurl{https://docs.cupy.dev/en/stable/install.html}} nearly all array operations will take place on the GPU if the provided input is a CuPy array, as shown below. NumPy arrays can be passed if one wishes to still perform operations on the CPU.

\begin{lstlisting}[language=Python, caption=GPU support through CuPy.]
import cupy as cp

x_cp = cp.array(x)   # convert existing `numpy` array to `cupy` array

# apply functions like before, array operations take place on GPU
x_FS = pyffs.ffs(x_cp, T, T_c, N_FS)    # compute FS coefficients
x_r = pyffs.iffs(x_FS, T, T_c, N_FS)    # back to samples
y = pyffs.convolve(x_cp, x_cp, T, T_c, N_FS)    # convolve
x_interp = pyffs.fs_interp(x_FS, T, a, b, M)  # interpolate

\end{lstlisting}
Note that converting between CuPy and NumPy requires data transfer between the CPU and GPU, which could be costly for large arrays. Therefore, if passing CuPy arrays to pyFFS, it is recommended to perform as much pre-processing and post-processing as possible on the GPU in order to limit such data transfer.

\subsection{Summary}

\Cref{tab:func} compares equivalent functions between pyFFS and SciPy.
For a more extensive documentation and the latest information, we refer to \rurl{pyffs.readthedocs.io}, and example scripts can be found in the \texttt{examples} folder of the repository:  \rurl{github.com/imagingofthings/pyFFS/tree/master/examples}.

\begin{table}[t!]
		\centering
		\resizebox{\columnwidth}{!}{
	\begin{tabular}{|c|c|c|}
		\hline  & \textbf{pyFFS} & \textbf{SciPy} \\ 
		\hline 1-D Fourier analysis & \texttt{pyffs.ffs} & \texttt{scipy.fft.fft} \\ 
		\hline 1-D Fourier synthesis &  \texttt{pyffs.iffs} & \texttt{scipy.fft.ifft} \\ 
		\hline N-D Fourier analysis & \texttt{pyffs.ffsn}  & \texttt{scipy.fft.fftn} \\ 
		\hline N-D Fourier synthesis & \texttt{pyffs.iffsn} &  \texttt{scipy.fft.ifftn} \\
		\hline N-D convolution & \texttt{pyffs.convolve} & \texttt{scipy.signal.fftconvolve} \\ 
		\hline 1-D bandlimited interpolation & \texttt{pyffs.fs\_interp} & \texttt{scipy.signal.resample} \\
		\hline N-D bandlimited interpolation & \texttt{pyffs.fs\_interpn} & -\\ 
		\hline 
	\end{tabular} 
    }
	\caption{Functionality comparison between pyFFS and SciPy~\cite{Virtanen2020}. SciPy's convolution zero-pads the inputs in order to approximate a linear convolution, while pyFFS performs a circular convolution. Within SciPy, circular convolution is only supported for 2-D by calling \texttt{scipy.signal.convolve2d}  with the parameter \texttt{boundary=`wrap'}.
	For N-D bandlimited interpolation with SciPy, it is possible to use \texttt{scipy.signal.resample} along each dimension. However, there is no one-shot function.}
	\label{tab:func}
\end{table}
\section{Benchmarking} 


\label{sec:benchmark} 
Computational efficiency is a primary objective for pyFFS. 
In this section, we present several benchmarking results to compare the computational speed between pyFFS and SciPy for convolution and interpolation, and to demonstrate the benefits of GPU acceleration.\footnote{The scripts to reproduce these results can be found in the in the \texttt{profile} folder of the repository:  \rurl{github.com/imagingofthings/pyFFS/tree/master/profile}.}
All benchmarking is performed on a Lenovo ThinkPad P15 Gen 1 laptop, with an Intel i7-10850H six-core processor and an NVIDIA Quadro RTX 3000 GPU (when applicable).

\subsection{Convolution}


Before presenting the benchmarking results, we show a toy example of 1-D convolution, to compare the outputs of pyFFS and SciPy.

\begin{figure}[t!]
	\centering
	\begin{subfigure}{.49\textwidth}
	\includegraphics[width=0.99\linewidth]{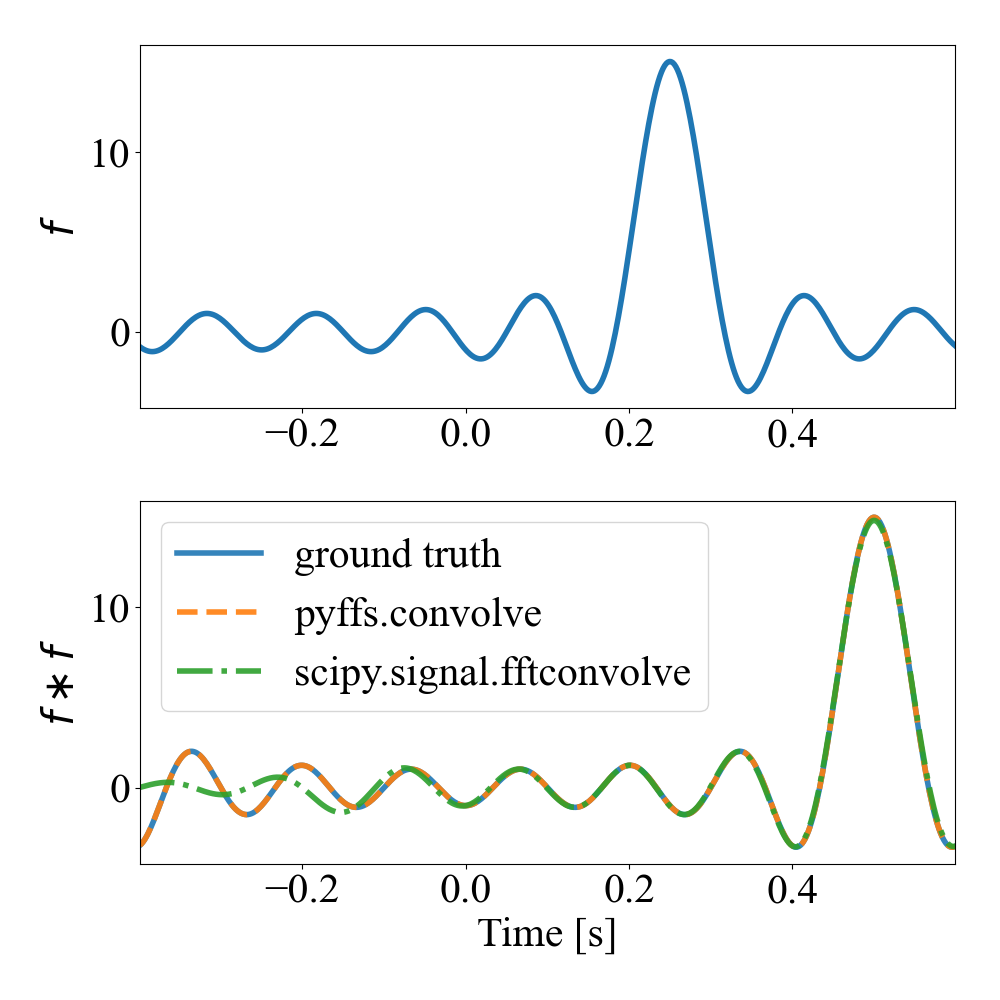} 
	\caption{}
	\label{fig:convolve_1d}
	\end{subfigure}
	\hfill
	\begin{subfigure}{.49\textwidth}
	\centering
	\includegraphics[width=0.99\linewidth]{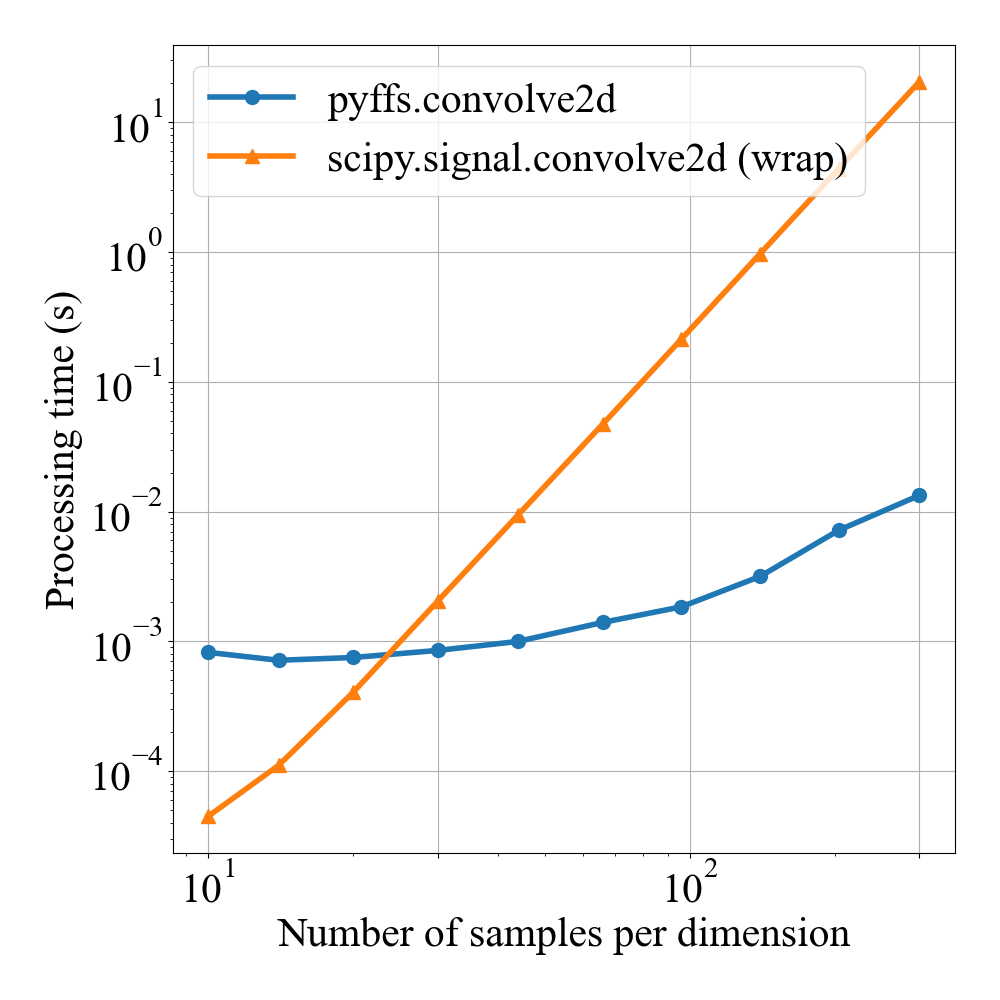} 
	\caption{}
	\label{fig:profile_convolve2d}
\end{subfigure}
\caption{(a) Comparing 1-D convolution between pyFFS and SciPy. pyFFS performs a circular convolution as is expected by Fourier series coefficients, while SciPy's \texttt{fftconvolve} approximates a linear convolution by zero-padding DFT coefficients. (b) Profiling 2-D bandlimited circular convolution between pyFFS and SciPy.}
\end{figure}

Both DFT and FS analysis assume periodic functions, discrete and continuous ones respectively~\cite{vetterli2014foundations}. Moreover, the convolution of two periodic functions of the same period results in a circular convolution, such that the output is also periodic. In certain scenarios, one may be interested in the convolution of two signals that are not necessarily periodic, e.g.\ a speech recording and a room impulse response. By zero-padding the two inputs, a linear convolution can be computed from the circular convolution of two discrete inputs. This is precisely what the convolve function of SciPy does. However, SciPy offers no function for the 1-D circular convolution.\footnote{It can be manually done by taking the inverse DFT of the product of the DFTs of the two inputs.} pyFFS' convolution function performs the circular convolution, as expected by FS convolution theory. This explains the noticeable difference between the two approaches in \Cref{fig:convolve_1d}, in particular at the leftmost boundary. The SciPy approach tapers off due to zero-padding as it approximates a linear convolution, while pyFFS performs a faithful circular convolution.


\subsubsection*{Benchmark}

%

As SciPy's function for 1-D convolution yields a different result, we do not benchmark pyFFS against it. For 2-D convolution, SciPy has a function \texttt{scipy.signal.convolve2d} that can perform a circular convolution when the argument \texttt{boundary=`wrap'}. In \Cref{fig:profile_convolve2d} we compare \texttt{pyffs.convolve2d} against \texttt{scipy.signal.convolve2d}, as these two lead to the same output. As the number of samples per dimension grows, using pyFFS is noticeably faster than the equivalent function in SciPy. At around $ 100 $ samples per dimension, pyFFS is already two orders of magnitude faster than SciPy.

\subsection{Interpolation}
\label{sec:bench_interp}

Likewise, before presenting the benchmarking results, we show toy examples of interpolation in 1-D and 2-D, comparing \Cref{thm:fast_interp_complex} and interpolation by zero-padding DFT coefficients.

\Cref{fig:interp1d_input,fig:interp1d_comparison} show the interpolation of a section of a 1-D Dirichlet function
\begin{align}
\label{eq:dirichlet}
\phi(t) = \sum_{k=-N}^{N}\exp\Big(j \frac{2\pi}{T} k(t - T_c) \Big),
\end{align}
whose bandwidth is given by $ N_{\text{FS}} = 2N + 1 $. \Cref{thm:fast_interp_complex}, which makes use of \texttt{pyffs.fs\textunderscore interp}, and interpolating zero-padded DFT coefficients, which employs \texttt{scipy.signal.resample}, perfectly match the ground truth function, as shown in \Cref{fig:interp1d_comparison}. This is expected as both approaches are bandlimited interpolation techniques and the target function is bandlimited.

\begin{figure}[t!]
	\centering
	\begin{subfigure}{.49\textwidth}
		\centering
		\includegraphics[width=0.99\linewidth]{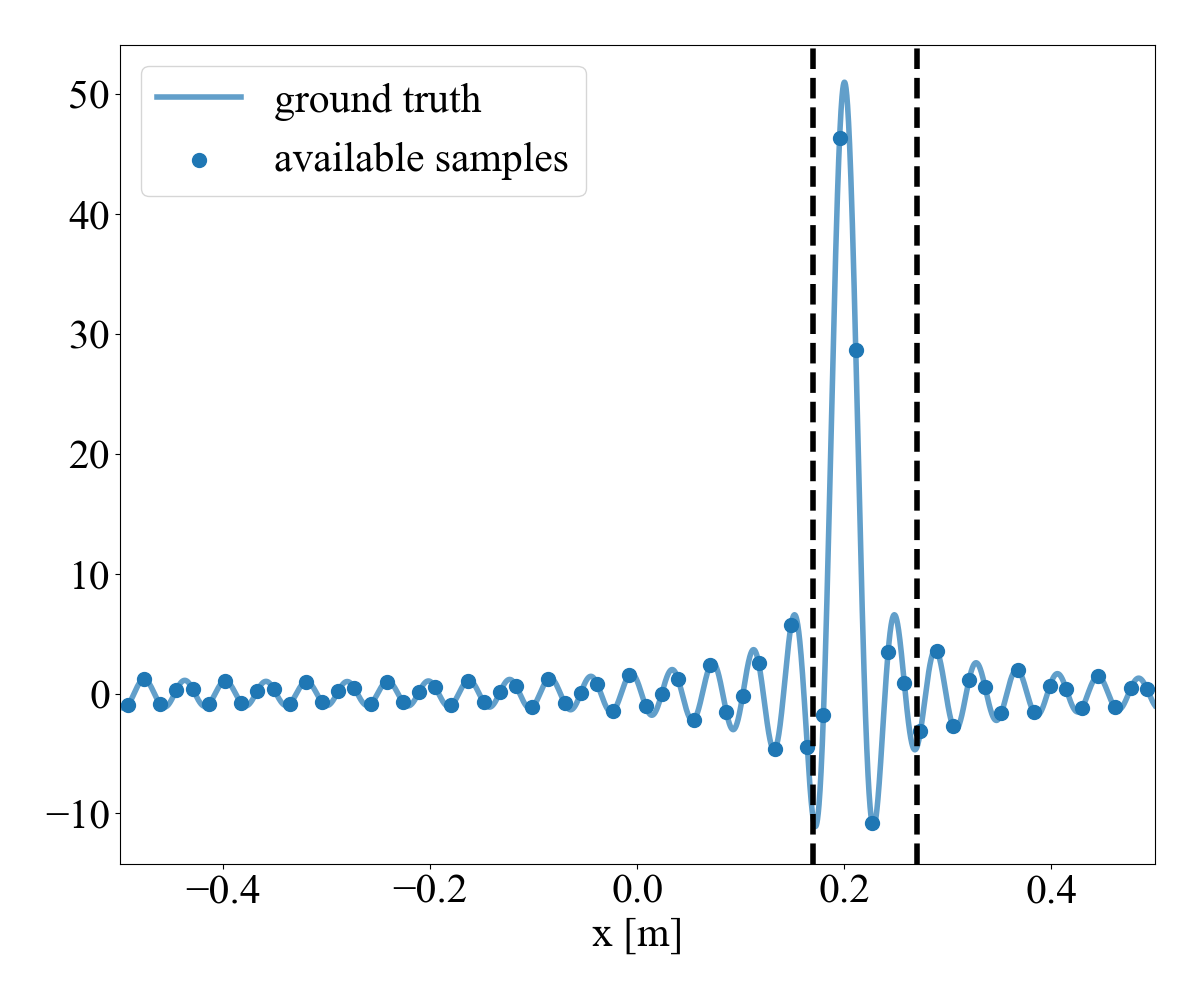} 
		\caption{1-D Dirichlet.}
		\label{fig:interp1d_input}
	\end{subfigure}
	\hfill
	\begin{subfigure}{.49\textwidth}
		\centering
		\includegraphics[width=0.99\linewidth]{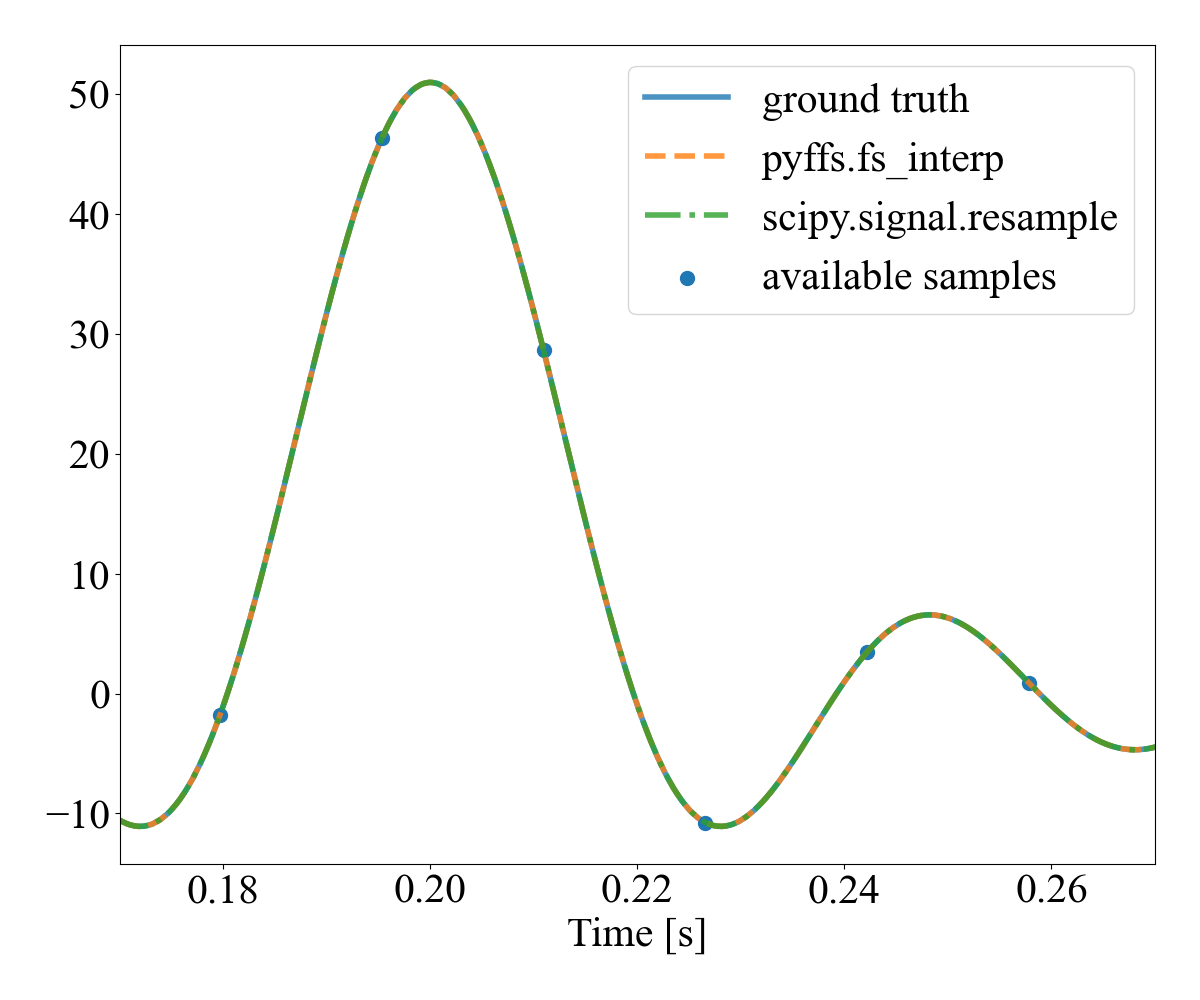}
		\caption{Interpolated section of (a).}
		\label{fig:interp1d_comparison}
	\end{subfigure}
	\centering
	\begin{subfigure}{.32\textwidth}
		\centering
		\includegraphics[width=0.99\linewidth]{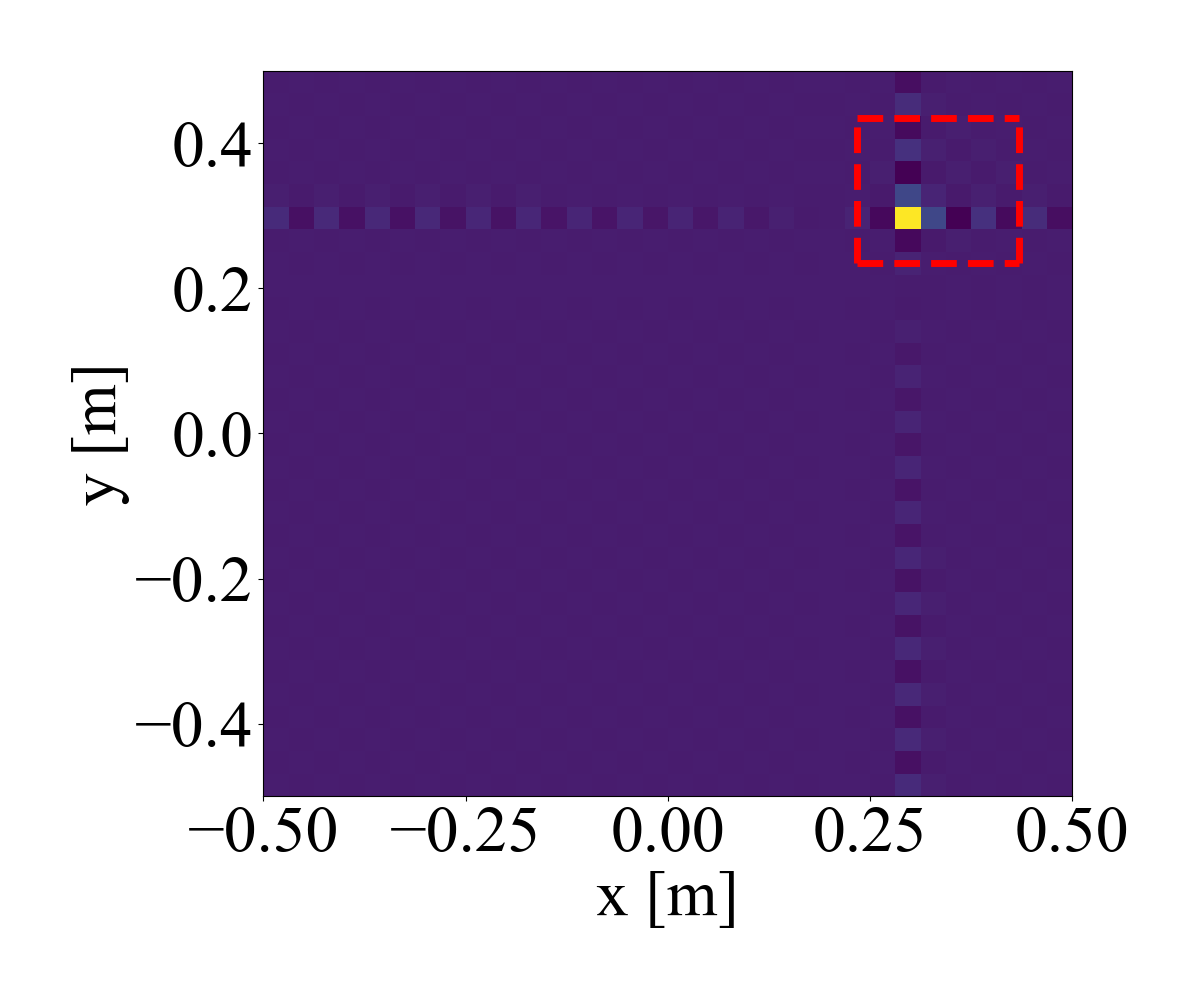} 
		\caption{2-D Dirichlet kernel.}
		\label{fig:interp_2d_input}
	\end{subfigure}
	\hfill
	\begin{subfigure}{.32\textwidth}
		\centering
		\includegraphics[width=0.99\linewidth]{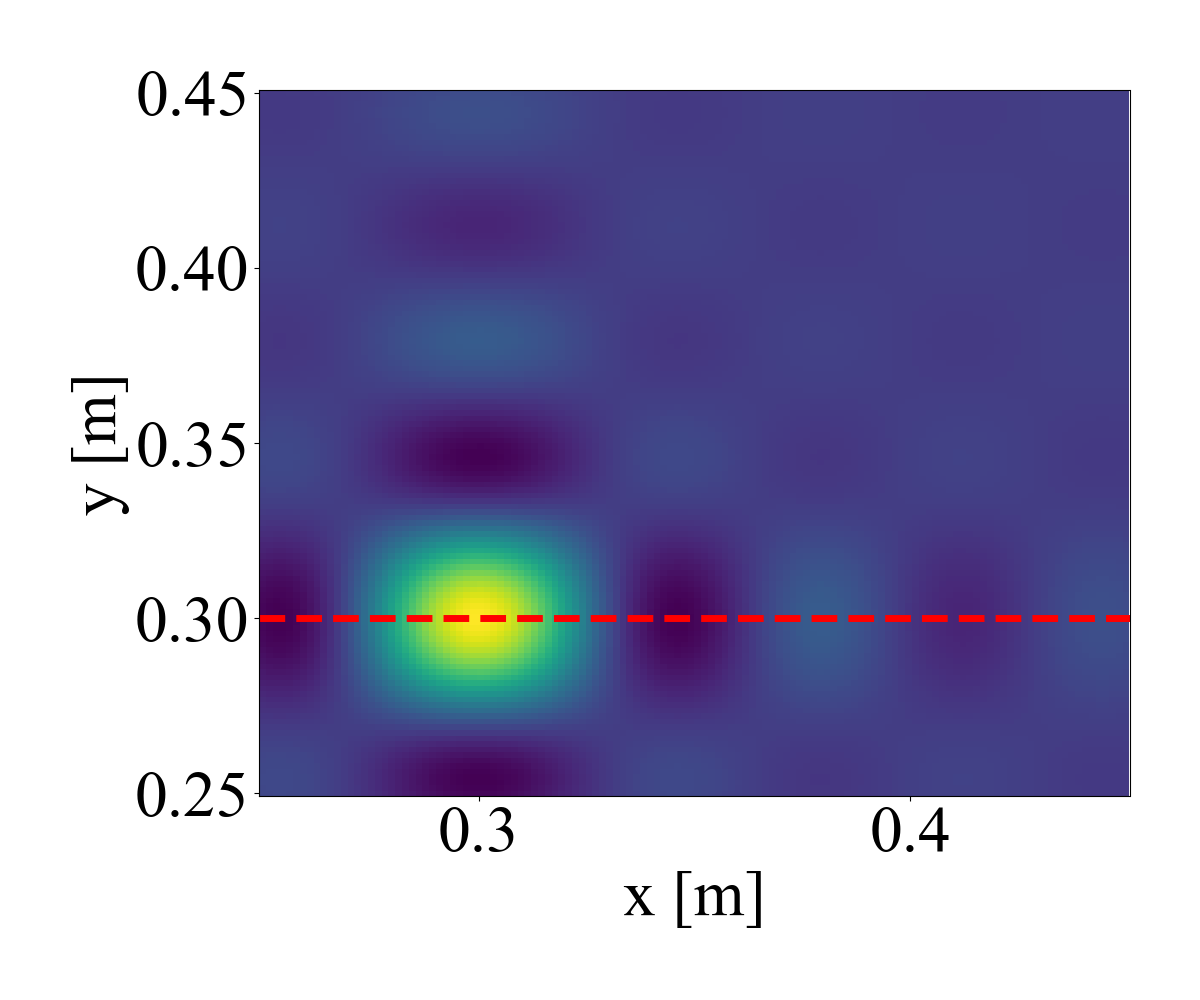}
		\caption{Interpolated region of (c).}
		\label{fig:interp_2d_ffs}
	\end{subfigure}
	\begin{subfigure}{.32\textwidth}
		\centering
		\includegraphics[width=0.99\linewidth]{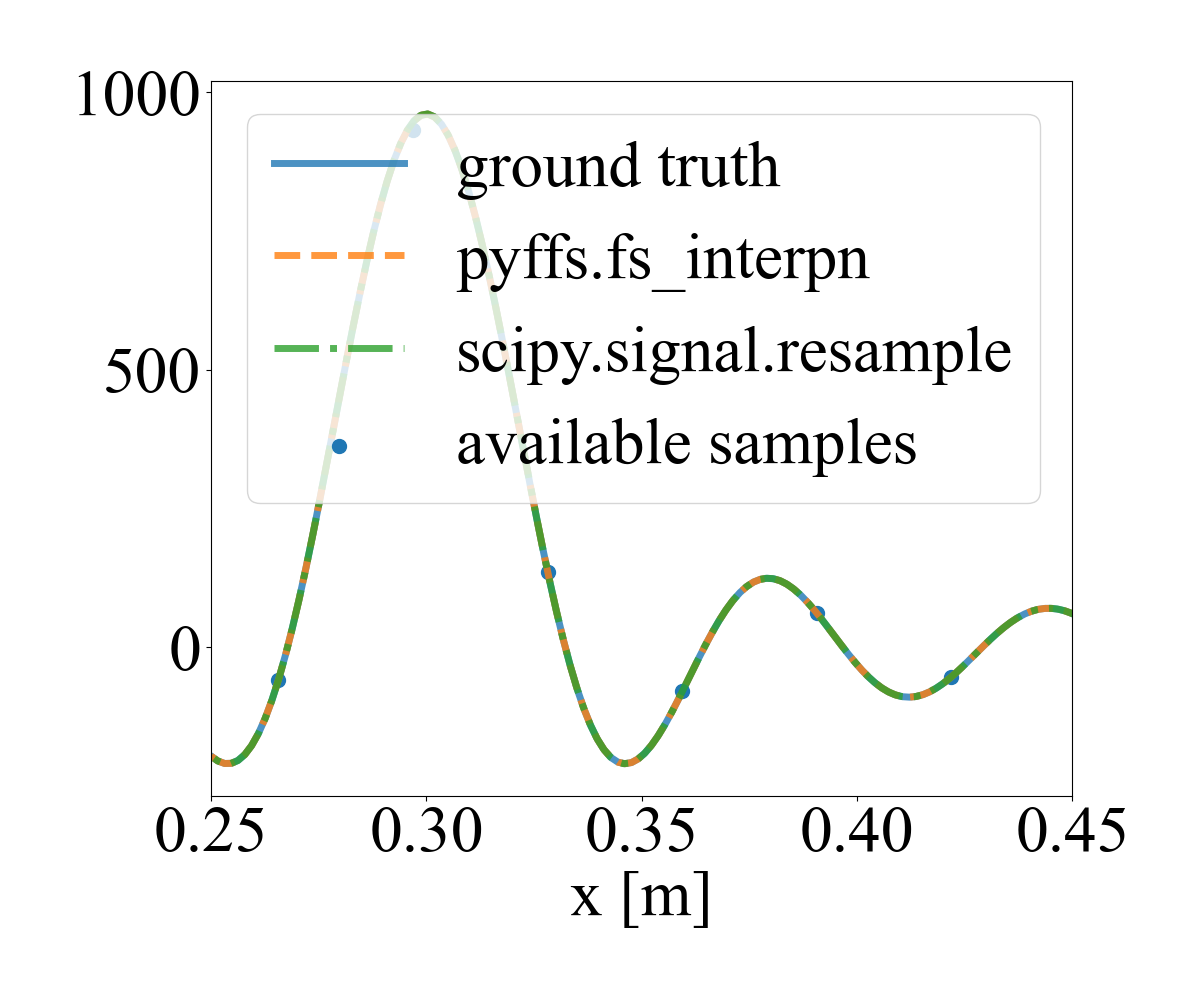}
		\caption{Cross-section at $ y=0.3 $.}
		\label{fig:interp_2d_output_slice}
	\end{subfigure}
	\caption{(Top) Bandlimited interpolation of a 1-D Dirichlet kernel, \Cref{eq:dirichlet} with $ N_{\text{FS}} = 51 $, $ T = 1 $, $ T_c  = 0$, and 64 samples. (Bottom) Bandlimited interpolation of a 2-D Dirichlet kernel, \Cref{eq:dirichlet_2d} with $ N_{\text{FS},x} = N_{\text{FS},y} = 31 $, $ T_x =T_y = 1 $, $ T_{c,x} = T_{c, y}  = 0$, and 256 samples along each dimension. The dashed box in (c) indicates the 2-D interpolated section shown in (d). The dashed line in (d) indicates the cross-section shown in (e).}
	\label{fig:interp_compare}
\end{figure}

\Cref{fig:interp_2d_input,fig:interp_2d_ffs,fig:interp_2d_output_slice} show the interpolation of a section of a 2-D Dirichlet function, which is essentially the outer product of two 1-D Dirichlet functions along the $ x$- and $ y$- dimensions
\begin{align}
	\label{eq:dirichlet_2d}
	\phi(x, y) &= \sum_{k_x = -N_x}^{N_x} \sum_{k_y = -N_y}^{N_y}
	\exp\left( j \frac{2 \pi}{T_x} k_x (x - T_{c,x}) \right)
	\exp\left( j \frac{2 \pi}{T_y} k_y (y - T_{c,y}) \right),
\end{align}
whose bandwidth is given by $ N_{\text{FS},x} = 2N_x + 1 $ and $ N_{\text{FS},y} = 2N_y + 1 $ in the $ x $- and $ y $- dimensions respectively. We can draw similar conclusions as for the 1-D case: the proposed technique and interpolating zero-padded DFT coefficients perfectly match the ground truth function as it is bandlimited.

\subsubsection*{Benchmark}

\begin{figure}[t!]
	\centering
	\begin{subfigure}{.49\textwidth}
		\centering
		\includegraphics[width=0.99\linewidth]{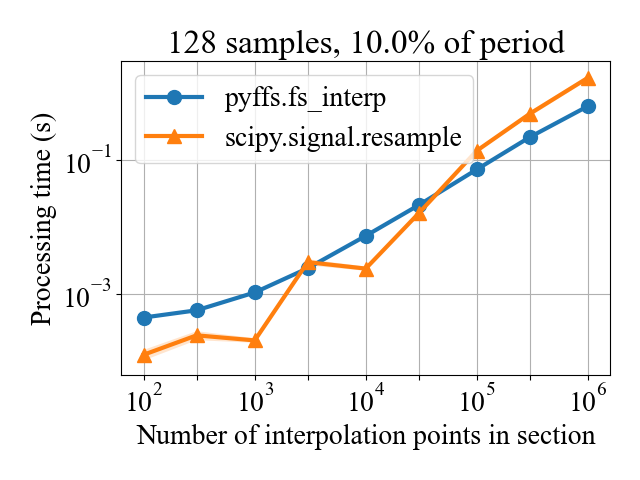}
		\caption{}
		\label{fig:bandlimited_interp1d_vary_M}
	\end{subfigure}
	\begin{subfigure}{.49\textwidth}
		\centering
		\includegraphics[width=0.99\linewidth]{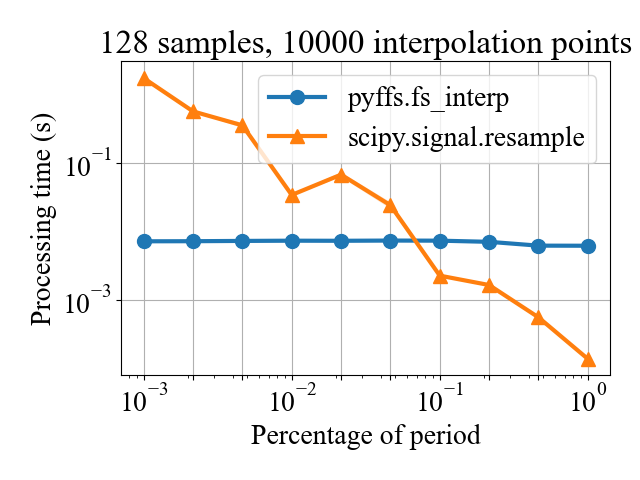}
		\caption{}
		\label{fig:bandlimited_interp1d_vary_width}
	\end{subfigure}
	\begin{subfigure}{.49\textwidth}
		\centering
		\includegraphics[width=0.99\linewidth]{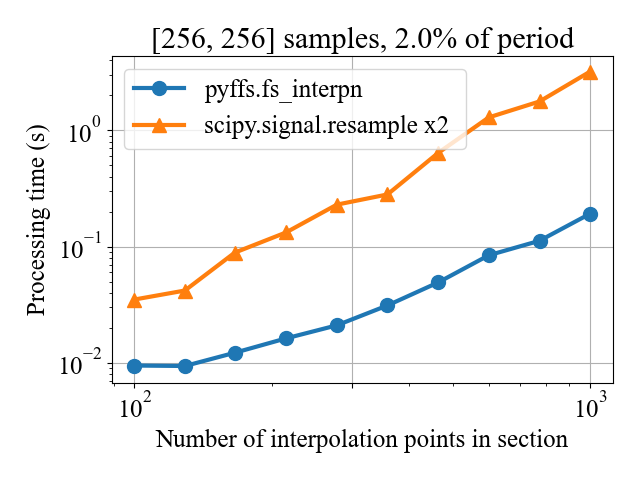}
		\caption{}
		\label{fig:bandlimited_interp2d_vary_M}
	\end{subfigure}
	\begin{subfigure}{.49\textwidth}
		\centering
		\includegraphics[width=0.99\linewidth]{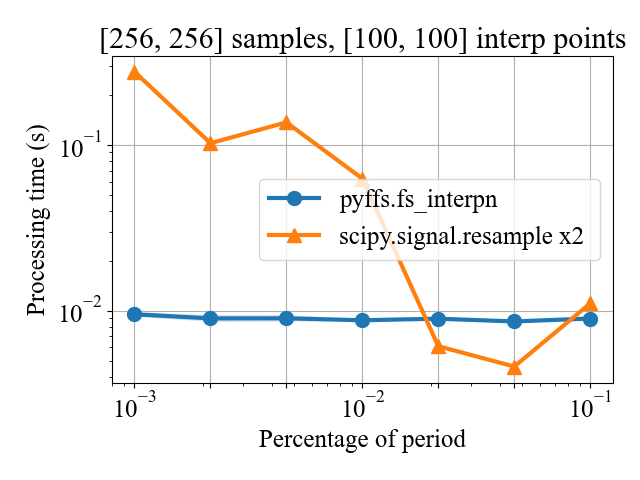}
		\caption{}
		\label{fig:bandlimited_interp2d_vary_width}
	\end{subfigure}
	\caption{(Top) Benchmarking 1-D bandlimited interpolation. $ 128 $ samples are taken of a signal that has bandwidth $ N_\text{FS} = 127 $. Each point in the curves is averaged over $ 10 $ trials. $ 10\% $ of the period is equivalent to the dotted black region in \Cref{{fig:interp1d_input}}. (Bottom) Benchmarking 2-D bandlimited interpolation. $ 256 \times 256 $ samples are taken of a signal that has bandwidth $ N_\text{FS} =[255 \times 255] $. Each point in the curves is averaged over $ 10 $ trials. $ 2\% $ of the period is equivalent to the dashed red box in \Cref{{fig:interp_2d_input}}.}
	\label{fig:benchmark_interp}
\end{figure}

%

As a reminder, the zero-padding technique results in a resampling across the entire period, whereas the proposed technique yields finer resolution only in the selected region through use of the CZT. As a result, the computational load of both approaches can be very different with respect to the width of the selected region and the number of interpolation points in that region.

In \Cref{fig:bandlimited_interp1d_vary_M,fig:bandlimited_interp1d_vary_width}, we compare the two bandlimited interpolation techniques in 1-D. As the number of interpolation points increases as shown in \Cref{fig:bandlimited_interp1d_vary_M}, it becomes slightly advantageous to use the CZT-based approach rather than interpolation by zero-padding DFT coefficients. 
The clear advantage of the CZT approach is when we focus in on smaller and smaller regions as shown in \Cref{fig:bandlimited_interp1d_vary_width},
as this necessitates a very large IDFT for the zero-padding approach. Moreover, for a fixed number of interpolation points, the computational load of the CZT approach is independent of the interpolation region size. As shown in \Cref{fig:bandlimited_interp1d_vary_width}, this is beneficial when we interpolate very small regions but detrimental when interpolating over larger sections. In the latter scenario, it is better to use the traditional zero-padding approach.

In \Cref{fig:bandlimited_interp2d_vary_M,fig:bandlimited_interp2d_vary_width}, we compare the two bandlimited interpolation techniques in 2-D. For a modest-sized input ($256 \times 256$ samples), it is noticeably advantageous to use the CZT-based approach as the number of interpolation points increases (\Cref{fig:bandlimited_interp2d_vary_M}), and very beneficial when we focus in on smaller regions (\Cref{fig:bandlimited_interp2d_vary_width}). The latter result makes the CZT-based approach an attractive choice when desiring to zoom into bandlimited images in a distortionless fashion.

\subsection{GPU acceleration}

\begin{figure}[t!]
	\centering
	\begin{subfigure}{.49\textwidth}
		\centering
		\includegraphics[width=0.99\linewidth]{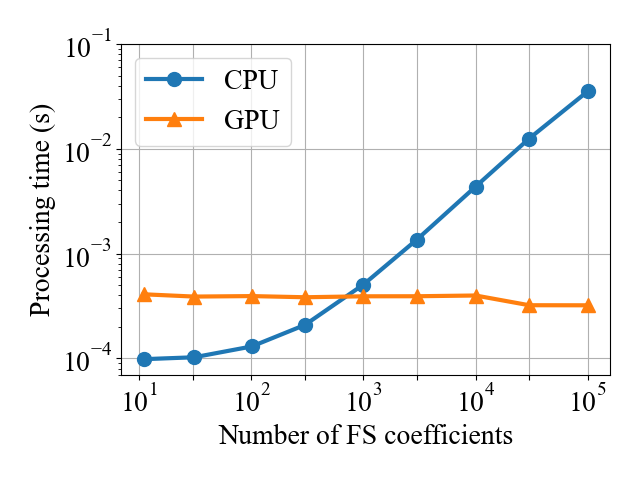} 
		\caption{1-D}
		\label{fig:ffs_1d}
	\end{subfigure}
	\hfill
	\begin{subfigure}{.49\textwidth}
		\centering
		\includegraphics[width=0.99\linewidth]{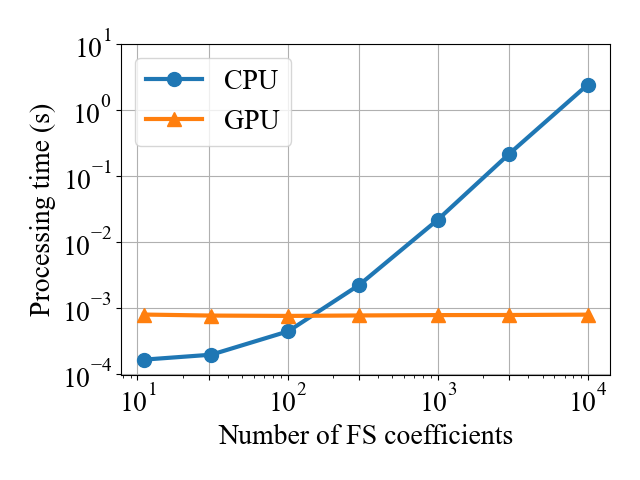}
		\caption{2-D}
		\label{fig:ffs_2d}
	\end{subfigure}
	\begin{subfigure}{.49\textwidth}
		\centering
		\includegraphics[width=0.99\linewidth]{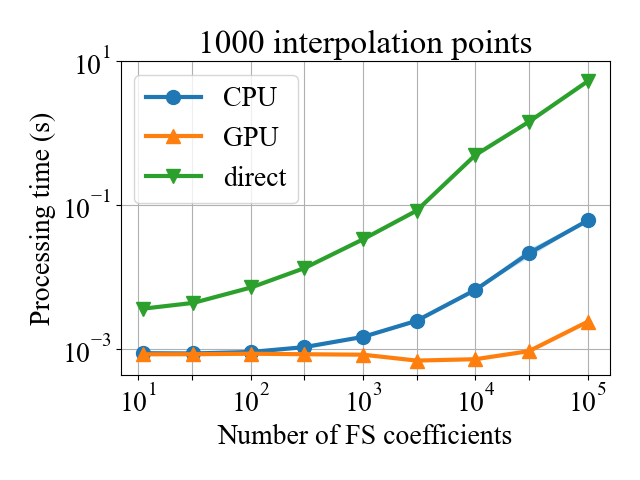} 
		\caption{}
		\label{fig:interp_1d_vary_Nfs}
	\end{subfigure}
	\hfill
	\begin{subfigure}{.49\textwidth}
		\centering
		\includegraphics[width=0.99\linewidth]{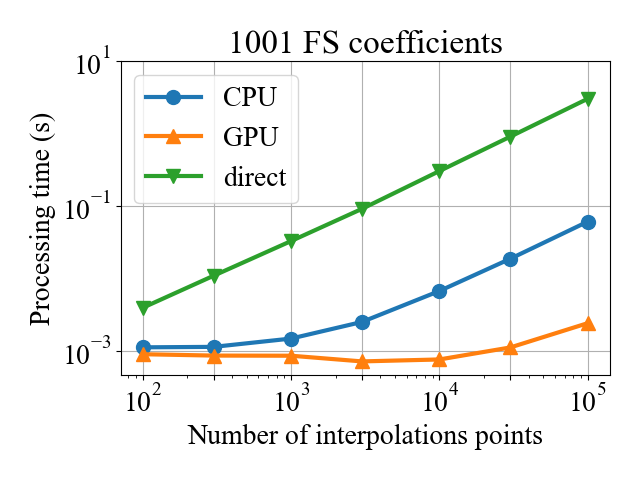}
		\caption{}
		\label{fig:interp1d_vary_M}
	\end{subfigure}
	\begin{subfigure}{.49\textwidth}
		\centering
		\includegraphics[width=0.99\linewidth]{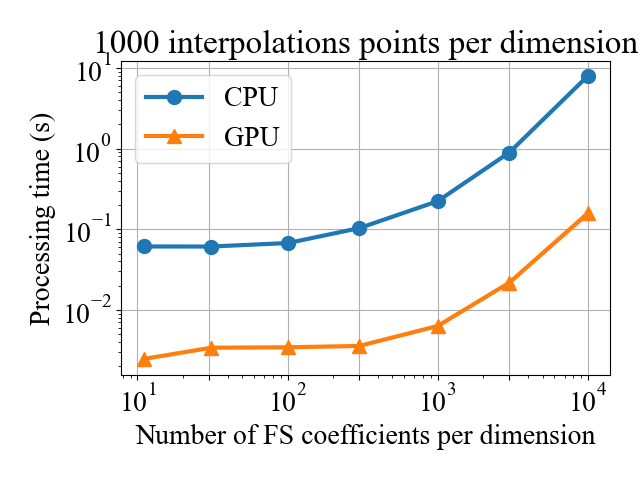} 
		\caption{}
		\label{fig:interp2d_vary_Nfs}
	\end{subfigure}
	\hfill
	\begin{subfigure}{.49\textwidth}
		\centering
		\includegraphics[width=0.99\linewidth]{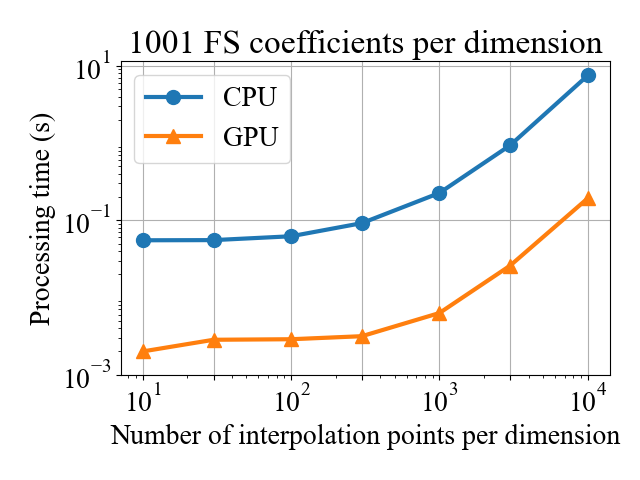}
		\caption{}
		\label{fig:interp2d_vary_M}
	\end{subfigure}
	\caption{Profiling GPU acceleration; each point is averaged over $ 10 $ trials. (Top) Fourier series computation; 1-D (middle) and 2-D (bottom) Fourier series interpolation.}
\end{figure}

We now quantify the speed-up provided by a GPU for FS computation and interpolation. There are two important considerations when using a GPU. Firstly, if the application permits, it is recommended to work with \texttt{float32} / \texttt{complex64} arrays for less memory consumption and potentially faster computation. By default, NumPy and CuPy create \texttt{float64} / \texttt{complex128} arrays, e.g. when initializing an array with \texttt{np.zeros}, so casting the arrays accordingly is recommended. In the benchmarking tests below, we use \texttt{float32} / \texttt{complex64} arrays. Secondly, the benefits of using a GPU typically emerge when the processed arrays are larger than the CPU cache. So the crossover between CPU and GPU performance can be very hardware dependent. 

\Cref{fig:ffs_1d,fig:ffs_2d} compares the processing time between a CPU and a GPU for computing an increasing number of FS coefficients. 
In 1-D, for more than $ 1'000 $ coefficients it starts to become beneficial to use a GPU, and at around $ 10'000 $ coefficients it is an order of magnitude faster to use a GPU.
In 2-D, the crossover point is at around $ 100 $ coefficients per dimension, and at around $ 1'000 $ coefficients per dimension it is more than an order of magnitude faster to use a GPU. From the 1-D and 2-D cases, it is clear that using a GPU scales well as the input increases in size. When considering a 2-D or even a 3-D object, where input sizes quickly grow, it is attractive to make use of a GPU for even modest input sizes.

\Cref{fig:interp_1d_vary_Nfs,fig:interp1d_vary_M} profiles the processing time for 1-D FS interpolation. The following three approaches are compared:
\begin{itemize}
	\item Directly evaluating the bandlimited Fourier synthesis expression~\cref{eq:bl_ifs} at each timestamp on a CPU.
	\item Applying \Cref{thm:fast_interp_complex} on a CPU.
	\item Applying \Cref{thm:fast_interp_complex} on a GPU.
\end{itemize}
As we vary both the number of FS coefficients and the number of interpolation points, using \Cref{thm:fast_interp_complex} greatly reduces computational cost, as we observe an order of magnitude reduction for the
interpolation of $ 300 $ FS coefficients.
The difference between ``direct'' and ``CPU'' is essentially the gains we get from the FFT algorithm, namely $ \mathcal{O}(N \log N) $ complexity instead of $ \mathcal{O}(N^3) $.
Using a GPU becomes more attractive as the number of coefficients and number of samples exceeds $ 300 $. As mentioned earlier, this is probably the point when the arrays and computation can no longer fit on the CPU cache.

\Cref{fig:interp2d_vary_Nfs,fig:interp2d_vary_M} compares the CPU and GPU approaches in 2-D function. Using a GPU consistently provides two orders of magnitude faster computation for a varying number of FS coefficients and varying number of interpolation points per dimension. The direct approach is not even considered as it is much too slow. The benefits of using a GPU are even more prominent in 2-D as input sizes quickly grow when considering multidimensional scenarios.

At the time of writing, CuPy has not implemented an equivalent of SciPy's \texttt{resample} function to perform interpolation comparisons as in \Cref{fig:benchmark_interp}.


\section{Example application in Fourier optics} 
\label{sec:results}


\begin{figure}[t!]
	\centering
	\begin{subfigure}{.34\textwidth}
		\centering
		\includegraphics[width=0.99\linewidth]{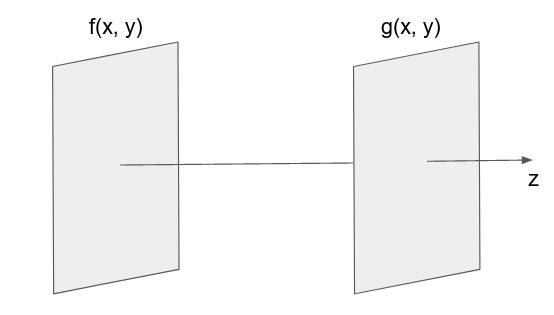} 
		\caption{}
		\label{fig:wave_prop}
	\end{subfigure}
	\hfill
	\begin{subfigure}{.34\textwidth}
		\centering
		\includegraphics[width=0.95\linewidth]{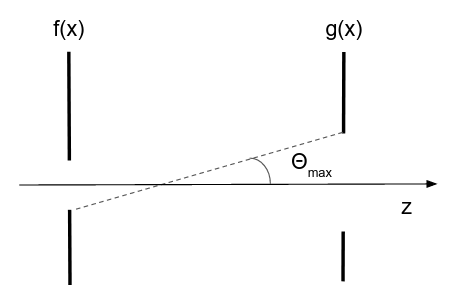}
		\caption{}
		\label{fig:max_angle}
	\end{subfigure}
	\hfill
	\begin{subfigure}{.3\textwidth}
		\centering
		\includegraphics[width=0.87\linewidth]{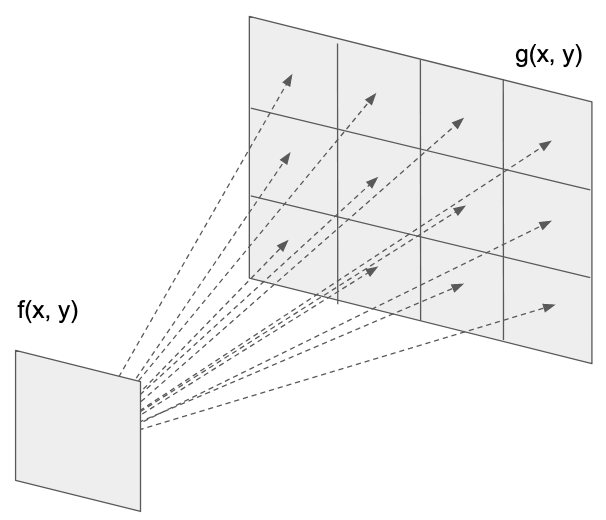}
		\caption{}
		\label{fig:rect_tiling}
	\end{subfigure}
	\caption{Visualization of optical wave propagation setup. (\ref{fig:wave_prop}) When simulating optical wave propagation for holography, one often considers the propagation along the $ z $-axis between two parallel planes, one being the source - $ f(x, y) $ - and the other being the target plane - $ g(x, y) $. (\ref{fig:max_angle}) In practice we have a finite input and output region, which determines the maximum angle and therefore maximum spatial frequency we can observe. Focusing on a single axis $ x $, this maximum frequency is given by $ \sin \theta_{\text{max}} / \lambda $, where $ \lambda $ is the optical wavelength. (\ref{fig:rect_tiling}) Lower resolution source wavefront for holographic tiling.}
	\label{fig:optics}
\end{figure}

In Fourier optics, we are often interested in the propagation of light between two planes, i.e.\ a source plane and a target plane as shown in \Cref{fig:wave_prop}. Given an aperture function or phase pattern at the source plane, we would like to determine the pattern at the target plane, as predicted by the Rayleigh-Sommerfeld diffraction formula. This propagation is often modeled with one of three approaches that make use of the FFT for an efficient simulation: Fraunhofer approximation, Fresnel approximation, or the angular spectrum method~\cite{Goodman2005}. The choice between these three approaches typically depends on the requirements of the application, e.g.\ the distance between the two planes
and the size of input and output regions~\cite{Schmidt2010}. 
For all approaches, we again find ourselves with a continuous-domain phenomenon that can be considered bandlimited and periodic. Bandlimited as in practice we consider finite input and output regions, lending to a restricted set of angles and therefore a bandlimited spatial frequency response between the source and target planes. This restriction of angles is shown in \Cref{fig:max_angle}. Even though our input may not be bandlimited, the resulting output is bandlimited after convolution with such a response~\cite{Matsushima2009}. Finally, we can frame the optical simulation as periodic as the input and output regions have a compact support and can thus be replicated to form periodic signals. 

The application of the CZT, or equivalently the fractional FT, for interpolation has already found its use in Fourier optics to resample the output plane outside of the grid defined by the FFT~\cite{Muffoletto2007,Nascov2009,Yu2012}, as demonstrated with pyFFS in \cref{fig:interp_2d_input,fig:interp_2d_ffs}. 

Below we show how the pyFFS interface can be used in optical wave propagation for efficient simulation and interpolation.

\begin{lstlisting}[language=Python, label={lst:prop}, caption=Optical free space propagation with pyFFS.]
# pad input and reorder
f_pad = numpy.pad(f, pad_width=pad_width)
f_pad_reorder = pyffs.ffs_shift(f_pad)

# compute FS coefficients of input
F = pyffs.ffsn(f_pad_reorder, T, T_c, N_FS)

# convolution in frequency domain with free space transfer function
G = F * H

# interpolate at the desired location and resolution
# a and b specify the region while N_out specifies the resolution
g = pyffs.fs_interpn(G, T, a, b, N_out)
\end{lstlisting}

The free space propagation transfer function \texttt{H} in the above code listing can be obtained by evaluating the analytic expression for the Fresnel approximation or the angular spectrum method transfer functions at the appropriate frequency values~\cite{Goodman2005}, or by measuring this response and computing its FS coefficients with \texttt{pyffs.ffsn}.

One may wish to simulate an output window with the same size as the input but at a finer resolution.
In order to circumvent the much larger FFT that this may require,
an approach known as rectangular tiling~\cite{Muffoletto2007}, as shown in \Cref{fig:rect_tiling}, can be used to split the output window into tiles. In its original proposition, the tiles were simulated sequentially, but with a GPU they could be computed in parallel for a significantly shorter simulation time: pyFFS's GPU support enables this possibility. Moreover, rectangular tiling in its original proposition requires that each tile has the same number of samples as the input window. This restriction is removed by the interpolation approach of pyFFS.

\section{Conclusion}
\label{sec:conclusion}

In this paper we have presented pyFFS, a Python library for efficient Fourier series (FS) coefficient computation, convolution, and interpolation. The intended use of this package is when working with discrete samples that arise from a continuous-domain signal. When the underlying signal is periodic (or has finite support and can be periodized) and bandlimited, its FS coefficients can be computed and interpolated in a straightforward and distortionless fashion with pyFFS. If either periodicity or bandlimitedness is not met, the same workarounds as when applying the discrete Fourier transform can be used, namely windowing to taper discontinuous boundaries or bandlimiting by FS coefficient truncation.

As computation is posed in the continuous-domain, accuracy loss that may arise from switching between the discrete- and the continuous-domain can be minimized. Moreover, this package serves as a handy continuous-domain complement to the functionalities already available in SciPy~\cite{Virtanen2020}.
We also provide functionality not available in SciPy, namely N-D circular convolution,  N-D bandlimited interpolation, and a bandlimited interpolation technique based on the chirp Z-transform. As shown in our benchmarking results, the latter can be more than an order of magnitude faster when interpolating sub-regions of a 1-D  or 2-D periodic function. Similar results can be expected for a general N-D function. Furthermore, GPU support has been seamlessly integrated through the CuPy package~\cite{Okuta2017}, offering more than an order of magnitude reduction when computing and interpolating a large number of FS coefficients.

In summary, pyFFS offers researchers and engineers a convenient and efficient interface for working with FS coefficients. The source code is made available on GitHub\footnote{\rurl{github.com/imagingofthings/pyFFS}} and can be easily installed for Python through PyPi.\footnote{\texttt{pip install pyffs}} More extensive and up-to-date documentation can be found at \rurl{pyffs.readthedocs.io}.

\appendix
 \section{Fast Fourier series computation for even-length sequences and proofs} 
\label{sec:proofs}

 \Cref{thm:ffs_odd} addresses the fast Fourier series computation for odd-length sequences. For even-length sequences, there is a slight modification in the timestamps and modulation terms.

\begin{theorem}[Fast Fourier series, $N_{\text{s}} \in 2 \bN$]\label{thm:ffs_even}
      	Let $x : \bR \to \bC$ be a $T$-periodic function of bandwidth $N_{\text{FS}} = 2 N +
      	1$, with $T_{c} \in \bR$ the mid-point of any period.  Let $Q \in 2 \bN + 1$ be an
      	arbitrary odd integer such that $N_{\text{s}} = N_{\text{FS}} + Q$. Then
	\begin{gather}
		\bbx = N_{\text{s}} \iDFT_{N_s}\bigParen{\bbX^{\text{FS}} \odot B_{1}^{\bbE_{1}}} \odot B_{2}^{N \bbE_{2}}, \label{eq:ffs_even_phi} \\
		\bbX^{\text{FS}} = \frac{1}{N_{\text{s}}} \DFT_{N_s}\bigParen{\bbx \odot B_{2}^{-N \bbE_{2}} } \odot B_{1}^{-\bbE_{1}}, 
		\label{eq:ffs_even_phiFS}
	\end{gather}
      	where
      	\begin{gather}
      		\bbx = \bigBrack{x(t_{0}), \ldots, x(t_{M-1}), x(t_{-M}), \ldots, x(t_{-1})} \in \bC^{N_{\text{s}}}, \nonumber \\
      		\bbX^{\text{FS}} = \bigBrack{X_{-N}^{\text{FS}}, \ldots, X_{N}^{\text{FS}}, \bbZero_{Q}} \in \bC^{N_{\text{s}}}, \nonumber \\
      		t_{n} = T_{c} + \frac{T}{N_{\text{s}}} \bigParen{\frac{1}{2} + n}, \quad n \in \bZ, \label{eq:ffs_even_tn} \\
      		M = N_{\text{s}} / 2, \nonumber
      	\end{gather}
      	and
      	
      		\resizebox{0.95\columnwidth}{!}{
      	\begin{minipage}{0.45\textwidth}
      		\begin{center}
      			\begin{gather*}
      				B_{1} = \exp\bigParen{j \frac{2 \pi}{T} \bigBrack{T_{c} + \frac{T}{2 N_{\text{s}}}}} \in \bC, \\
      				\bbE_{1} = \bigBrack{-N, \ldots, N, \bbZero_{Q}} \in \bZ^{N_{\text{s}}},
      			\end{gather*}
      		\end{center}
      	\end{minipage}
      	\begin{minipage}{0.5\textwidth}
      		\begin{center}
      			\begin{gather*}
      				B_{2} = \exp\bigParen{-j \frac{2 \pi}{N_{\text{s}}}} \in \bC, \\
      				\bbE_{2} = \bigBrack{0, \ldots, M-1, -M, \ldots, -1} \in \bZ^{N_{\text{s}}}.
      			\end{gather*}
      		\end{center}
      	\end{minipage}
      }
\end{theorem}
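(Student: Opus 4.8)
The plan is to run the argument for \Cref{thm:ffs_odd} essentially verbatim, the only genuine change being that the half-integer sample offset in \cref{eq:ffs_even_tn} gets absorbed into the modulation constant $B_{1}$. Concretely, I would start from the bandlimited synthesis formula \cref{eq:bl_ifs}, substitute the even-grid timestamps $t_{n} = T_{c} + \frac{T}{N_{\text{s}}}\bigParen{\frac{1}{2}+n}$, and split the exponential $\exp\bigParen{j \frac{2\pi}{T} k t_{n}}$ into a purely $k$-dependent phase and a factor $W_{N_{\text{s}}}^{-kn}$ with $W_{N_{\text{s}}} = \exp\bigParen{-j\frac{2\pi}{N_{\text{s}}}} = B_{2}$. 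The $T_{c}$-term together with the half-sample shift $T/(2N_{\text{s}})$ combine exactly into $B_{1}^{k}$ with $B_{1} = \exp\bigParen{j\frac{2\pi}{T}\bigBrack{T_{c}+\frac{T}{2N_{\text{s}}}}}$, giving
\begin{equation*}
x(t_{n}) = \sum_{k=-N}^{N} X_{k}^{\text{FS}} B_{1}^{k} W_{N_{\text{s}}}^{-kn}.
\end{equation*}
This identity is the one place where the even case diverges from the odd one; everything downstream is bookkeeping.

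Next I would recognize the right-hand side as a re-indexed, modulated inverse DFT. In $\bbX^{\text{FS}} \odot B_{1}^{\bbE_{1}}$ the DFT bin $m \in \{0,\dots,2N\}$ carries $X_{m-N}^{\text{FS}} B_{1}^{m-N}$ and the remaining $Q$ bins vanish, so substituting $k = m-N$,
\begin{equation*}
\bigBrack{\iDFT_{N_{\text{s}}}\bigParen{\bbX^{\text{FS}} \odot B_{1}^{\bbE_{1}}}}_{n} = \frac{1}{N_{\text{s}}}\, W_{N_{\text{s}}}^{-Nn}\, x(t_{n}),
\end{equation*}
hence $x(t_{n}) = N_{\text{s}}\, W_{N_{\text{s}}}^{Nn}\, \bigBrack{\iDFT_{N_{\text{s}}}(\bbX^{\text{FS}} \odot B_{1}^{\bbE_{1}})}_{n}$ for every integer $n$. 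To match the ordering of $\bbx$ in the statement, I would use $t_{n+N_{\text{s}}} = t_{n}+T$ together with the $T$-periodicity of $x$ to identify $x(t_{-M}),\dots,x(t_{-1})$ with $x(t_{N_{\text{s}}-M}),\dots,x(t_{N_{\text{s}}-1})$, and use the $N_{\text{s}}$-periodicity of the IDFT index and of $n \mapsto W_{N_{\text{s}}}^{Nn}$ (from $W_{N_{\text{s}}}^{N N_{\text{s}}} = 1$) to conclude that at output position $p$ the modulating factor is $W_{N_{\text{s}}}^{Np} = B_{2}^{N[\bbE_{2}]_{p}}$, since $[\bbE_{2}]_{p} \equiv p \pmod{N_{\text{s}}}$ with $M = N_{\text{s}}/2$. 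This yields \cref{eq:ffs_even_phi}.

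Finally, \cref{eq:ffs_even_phiFS} follows by inversion: the vectors $B_{2}^{N\bbE_{2}}$ and $B_{1}^{\bbE_{1}}$ have entrywise nonzero (unimodular) entries and so can be divided out, and $\DFT_{N_{\text{s}}} \circ \iDFT_{N_{\text{s}}}$ is the identity; applying these to \cref{eq:ffs_even_phi} reproduces $\bbX^{\text{FS}}$, including its trailing $Q$ zeros, which are consistent because $\bbx$ genuinely samples a signal of bandwidth $N_{\text{FS}}$. I expect the only real obstacle to be the indexing: keeping the half-sample shift in the right place inside $B_{1}$, and checking that the cyclic relabeling of the negative-time samples for even $N_{\text{s}}$ produces exactly the index vector $\bbE_{2}$ and the timestamps \cref{eq:ffs_even_tn} as stated. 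No new idea beyond \Cref{thm:ffs_odd} is needed.
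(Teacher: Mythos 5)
Your proposal is correct and follows essentially the same route as the paper's own proof, which treats \cref{thm:ffs_odd,thm:ffs_even} together: substitute the (half-sample-shifted) timestamps into the bandlimited synthesis formula, absorb the shift into $B_{1}$, re-index the sum as a zero-padded modulated IDFT, reorder the negative-time samples via $T$-periodicity to obtain $\bbE_{2}$, and invert for the analysis formula. If anything, your explicit remark that the $Q$ padded bins vanish when extending the sum from $2N$ to $N_{\text{s}}-1$ is slightly more careful than the paper's wording at that step.
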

 
\begin{proof}{\cref{thm:ffs_odd,thm:ffs_even}}
	
	Starting with the Fourier series (FS) synthesis expression for a bandlimited signal~\cref{eq:bl_ifs}, we plug in $t_{n}$ from \cref{eq:ffs_odd_tn,eq:ffs_even_tn}
	\begin{align*}
		x(t_{n})
		& = \sum_{k = -N}^{N} X_{k}^{\text{FS}} \exp\bigParen{j \frac{2 \pi}{T} k t_{n}} \\
		& = \begin{dcases}
			\sum_{k = -N}^{N} X_{k}^{\text{FS}} \exp\bigParen{j \frac{2 \pi}{T} k  \bigBrack{T_{c} + \frac{T}{N_{\text{s}}} n } }, \quad \text{\cref{thm:ffs_odd}} \\
			\sum_{k = -N}^{N} X_{k}^{\text{FS}} \exp\bigParen{j \frac{2 \pi}{T} k  \bigBrack{T_{c} +  \frac{T}{N_{\text{s}}} \bigParen{\frac{1}{2} + n} } }, \quad \text{\cref{thm:ffs_even}}.
		\end{dcases}
	\end{align*}
	With $ B_{1} = \exp\bigParen{j \frac{2 \pi}{T} T_{c}} $ for \cref{thm:ffs_odd} and $ B_{1} = \exp\bigParen{j \frac{2 \pi}{T} \bigBrack{T_{c} + \frac{T}{2 N_{\text{s}}}}} $ for \cref{thm:ffs_even}, and $ B_{2} = \exp\bigParen{-j \frac{2 \pi}{N_{\text{s}}}} $ for both cases, we can write
	\begin{align*}
		x(t_{n}) &= \sum_{k = -N}^{N} X_{k}^{\text{FS}} B_{1}^{k} B_{2}^{-nk}.
	\end{align*}
	We can then shift the summation terms so they are similar to that of an inverse discrete Fourier transform (IDFT), i.e.\ summation starting at $ k = 0 $
	\begin{align*}
		x(t_{n}) &= \sum_{k = 0}^{2 N} X_{k - N}^{\text{FS}} B_{1}^{k - N} B_{2}^{-n(k-N)} = B_{2}^{n N} \sum_{k = 0}^{N_{\text{s}} - 1} X_{k - N}^{\text{FS}} B_{1}^{k - N} B_{2}^{-nk},
	\end{align*}
	where $ N_{\text{s}} = 2N + 1 $.
	
	We now introduce notation to index the $ k $-th element of a vector as $ x_k = [\bbx]_k $ in order to write
	\begin{align*}
		x(t_{n}) & = B_{2}^{n N} \sum_{k = 0}^{N_{\text{s}} - 1} \bigBrack{\bbX^{\text{FS}} \odot B_{1}^{\bbE_{1}}}_{k} B_{2}^{-nk},
	\end{align*} 
	where $\bbE_{1}$ is as defined in \cref{thm:ffs_odd,thm:ffs_even}, and $ \bbX^{\text{FS}}  $ contains the ordered FS coefficients of $ x(t) $ within $ [-N, N] $. 
	
	As $ B_{2} = \exp\bigParen{-j \frac{2 \pi}{N_{\text{s}}}} $, we can write the above expression as an IDFT of modulated FS coefficients
	\begin{align*}
		x(t_{n}) = N_{\text{s}} \bigBrack{\iDFT_{N_{\text{s}}}(\bbX^{\text{FS}} \odot B_{1}^{\bbE_{1}})}_{n} B_{2}^{n N}.
	\end{align*} 
	Using the periodicity of $ x(t) $, namely $x(t_{n}) = x(t_{n + q N_{s}})$ for $ q \in \mathbb{Z} $, we can write\\[0.2em]
    	\begin{center}
    		\small
    		\fbox{
    			\begin{minipage}{0.6\textwidth}
    				\begin{center}
    					\vspace{0.25em}
    					\Cref{thm:ffs_odd}
    					\begin{align*}
    						\bbx
    						& = \bigBrack{x(t_{0}), \ldots, x(t_{M}), x(t_{M + 1}), \ldots, x(t_{2 M})} \\
    						& = \bigBrack{x(t_{0}), \ldots, x(t_{M}), x(t_{-M}), \ldots, x(t_{-1})}.
    					\end{align*}
    				\end{center}
    			\end{minipage}
    		}
    		\fbox{
    			\begin{minipage}{0.6\textwidth}
    				\begin{center}
    					\vspace{0.25em}
    					\Cref{thm:ffs_even}
    					\begin{align*}
    						\bbx
    						& = \bigBrack{x(t_{0}), \ldots, x(t_{M - 1}), x(t_{M}), \ldots, x(t_{2 M - 1})} \\
    						& = \bigBrack{x(t_{0}), \ldots, x(t_{M - 1}), x(t_{-M}), \ldots, x(t_{-1})}.
    					\end{align*}
    				\end{center}
    			\end{minipage}
    		}
    	\end{center}
    	\vspace{1em}
    	We can write this set of reordered samples as the IDFT of modulated FS coefficients
    	\begin{align*}
    		\bbx = N_{\text{s}} \iDFT_{N_{\text{s}}}(\bbX^{\text{FS}} \odot B_{1}^{\bbE_{1}}) B_{2}^{N \bbE_{2}},
    	\end{align*}
    	where $ \bbE_{2} ,$ as defined in \cref{thm:ffs_odd,thm:ffs_even}, encapsulates this reordering of samples. We have therefore shown the relations \cref{eq:ffs_odd_phi,eq:ffs_even_phi} for \cref{thm:ffs_odd,thm:ffs_even} respectively. In order to obtain \cref{eq:ffs_odd_phiFS,eq:ffs_even_phiFS}, which express the ordered FS coefficients as a function of the reordered samples, we simply have to invert \cref{eq:ffs_odd_phi,eq:ffs_even_phi} respectively.
\end{proof}
 
 \section{Fast Fourier series interpolation proof} 
 \label{sec:fs_interp_proof}
 
\begin{proof}{\cref{thm:fast_interp_complex}}
	
	Starting with the Fourier series (FS) synthesis expression for a bandlimited signal~\cref{eq:bl_ifs}, we plug in $t_{n}$ from \cref{thm:fast_interp_complex}
	
	\begin{align*}
	x(t_{n})
	& = \sum_{k = -N}^{N} X_{k}^{\text{FS}} \exp\bigParen{j \frac{2 \pi}{T} k t_{n}} = \sum_{k = -N}^{N} X_{k}^{\text{FS}} \exp\bigParen{j \frac{2 \pi}{T} k  \Big( a + \frac{b - a}{M - 1} n \Big)  }.
	\end{align*}
	Using $ A = \exp\bigParen{-j \frac{2\pi}{T} a}$ and $W = \exp\bigParen{j \frac{2 \pi}{T} \frac{b - a}{M - 1}} $ from \cref{thm:fast_interp_complex}, we can write
	\begin{align*}
	x(t_{n})
	& = \sum_{k = -N}^{N} X_{k}^{\text{FS}} A^{-k} W^{n k}.
	\end{align*}
	We can then shift the summation terms so they are similar to that of the chirp Z-transform (CZT) from \cref{def:czt}, i.e.\ summation starting at $ k = 0 $
	\begin{align*}
	x(t_{n}) & =  \sum_{k = 0}^{2 N} X_{k - N}^{\text{FS}} A^{-(k-N)} W^{n(k-N)} =  A^{N} W^{-Nn} \sum_{k = 0}^{N_{\text{FS}} - 1} X_{k - N}^{\text{FS}} A^{-k} W^{nk},
	\end{align*}
	where $ N_{\text{FS}} = 2N + 1 $.
	
	We can then use \cref{def:czt} of the CZT to writex
	\begin{align*}
	x(t_{n})
	= A^{N} W^{-Nn} \bigBrack{\CZT_{N_{\text{FS}}}^{M}(\bbX^{FS})}_{n}.
	\end{align*}
	Rearranging $x(t_{n})$ into vector form concludes the proof.
\end{proof}

\bibliographystyle{siamplain}
\bibliography{references}
\end{document}